\documentclass[11pt]{article}
\usepackage[a4paper,margin=1in]{geometry}
\usepackage{amsmath,amssymb,amsthm,mathtools,aliascnt,todonotes}
\usepackage[dvipsnames]{xcolor}
\usepackage{enumitem,graphicx,adjustbox,booktabs,tocloft,array,longtable}
\usepackage{microtype}
\usepackage{tikz}
\usetikzlibrary{arrows.meta,positioning,calc,fit}
\usepackage{hyperref}
\usepackage{cleveref}
\hypersetup{colorlinks=true,linkcolor=MidnightBlue,citecolor=MidnightBlue,urlcolor=MidnightBlue,
 pdftitle={Counting Paths and Trees via Exterior Algebra},pdfauthor={Daniel Lokshtanov, Fahad Panolan, Saket Saurabh, Meirav Zehavi and Jie Xue}}
\definecolor{softblue}{RGB}{241,244,248}
\definecolor{softgreen}{RGB}{242,245,242}
\definecolor{softorange}{RGB}{248,246,242}
\definecolor{softgray}{RGB}{247,247,247}
\renewcommand{\cftsecfont}{\normalfont}
\renewcommand{\cftsecpagefont}{\normalfont}
\newtheorem{theorem}{Theorem}[section]
\newaliascnt{lemma}{theorem}
\newtheorem{lemma}[lemma]{Lemma}
\aliascntresetthe{lemma}
\newaliascnt{corollary}{theorem}

\aliascntresetthe{corollary}
\newaliascnt{proposition}{theorem}
\newtheorem{proposition}[proposition]{Proposition}
\aliascntresetthe{proposition}
\newaliascnt{observation}{theorem}

\aliascntresetthe{observation}
\theoremstyle{definition}
\newaliascnt{definition}{theorem}
\newtheorem{definition}[definition]{Definition}
\aliascntresetthe{definition}
\newaliascnt{remark}{theorem}
\newtheorem{remark}[remark]{Remark}
\aliascntresetthe{remark}
\crefname{theorem}{Theorem}{Theorems}
\Crefname{theorem}{Theorem}{Theorems}
\crefname{lemma}{Lemma}{Lemmas}
\Crefname{lemma}{Lemma}{Lemmas}
\crefname{corollary}{Corollary}{Corollaries}
\Crefname{corollary}{Corollary}{Corollaries}
\crefname{proposition}{Proposition}{Propositions}
\Crefname{proposition}{Proposition}{Propositions}
\crefname{observation}{Observation}{Observations}
\Crefname{observation}{Observation}{Observations}
\crefname{definition}{Definition}{Definitions}
\Crefname{definition}{Definition}{Definitions}
\crefname{remark}{Remark}{Remarks}
\Crefname{remark}{Remark}{Remarks}

\newcommand{\R}{\mathbb{R}}
\newcommand{\E}{\mathbb{E}}
\newcommand{\Var}{\operatorname{Var}}
\newcommand{\calP}{\mathcal{P}}

\newcommand{\OO}{\mathcal O}
\newcommand{\Mat}{\operatorname{Mat}}
\newcommand{\Emb}{\operatorname{Emb}}
\newcommand{\Aut}{\operatorname{Aut}}
\newcommand{\HH}{\mathsf H}
\newcommand{\poly}{\operatorname{poly}}

\newcommand{\tr}{\operatorname{tr}}

\newcommand{\norm}[1]{\left\lVert #1\right\rVert}

\AtBeginDocument{%
  \let\originaltableofcontents\tableofcontents
  \renewcommand{\tableofcontents}{\par\noindent
    \begin{minipage}{\linewidth}\originaltableofcontents\end{minipage}\par}}
    \title{Counting Paths and Trees via Exterior Algebra}

\author{%
Fahad Panolan\thanks{%
University of Leeds, UK.
Email: \texttt{F.Panolan@leeds.ac.uk}.}
\and
Saket Saurabh\thanks{%
The Institute of Mathematical Sciences, HBNI, Chennai, India;
and University of Bergen, Norway.
Email: \texttt{saket@imsc.res.in}.}
\and
Meirav Zehavi\thanks{%
Ben-Gurion University of the Negev, Israel.
Email: \texttt{meiravze@bgu.ac.il}.}
\and
Jie Xue\thanks{%
New York University Shanghai, China.
Email: \texttt{jiexue@nyu.edu}.}
}
\date{}
\begin{document}
\maketitle
\begin{abstract}
We give randomized approximation algorithms for counting k-paths and k-forests in a host graph.  Here $k$ denotes the number of pattern vertices,  $n$ and $m$ denote the numbers of
host vertices and edges or arcs, $\varepsilon$ is the relative
error, and $\delta$ is the failure probability.
Our main results are:
\begin{itemize}
    \item \textbf{Paths.}
    We approximate the number of directed paths on $k$ vertices
    in
    $2^k k^{\OO(1)}(n+m)\varepsilon^{-2}\log(2/\delta)$
    arithmetic operations.

\item \textbf{Trees and forests.}
For every fixed $\eta>0$, we approximate the number of
non-induced copies of a given forest on $k$ vertices in
$(2+\eta)^k n^{\OO_\eta(1)}
\varepsilon^{-2}\log(2/\delta)$ arithmetic operations.
We also obtain algorithms for patterns with supplied
tree or path decompositions of bounded width.


\end{itemize}
Our path algorithm resolves a conjecture of Koutis and
Williams~[CACM 2016] and answers an open question of
Lokshtanov, Saurabh, and Zehavi~[SODA 2021] by giving
a $2^k\poly(n,\varepsilon^{-1})$-time approximation scheme.


Our algorithms combine exterior algebra with random matrix
estimators, using the tensor-train moment bound of Rakhshan
and Rabusseau~[AISTATS 2020].
For forests, we use a small-component separator to evaluate
the estimator efficiently. 
\end{abstract}
\newpage

%

\section{Introduction}\label{sec:introduction}
Counting small subgraphs is used to study the structure of biological
and other networks. For example, network motif analysis compares the
number of occurrences of a pattern in a given network with its
frequency in an appropriate random network model~\cite{Milo824}.
Such comparisons require estimating how often a pattern occurs,
rather than merely determining whether it occurs.
This application has motivated algorithms for approximately counting
paths and other small patterns in large
graphs~\cite{DBLP:conf/ismb/AlonDHHS08}.

We study approximate counting of graph embeddings. An embedding of a
pattern graph $F$ into a host graph $G$ is an injective map
$f:V(F)\to V(G)$ such that $f(u)f(v)\in E(G)$ for every
$uv\in E(F)$. For directed graphs, edge directions must also be
preserved. We impose no condition on pairs that are nonadjacent in $F$;
thus the embeddings need not be induced. Given $F$ and $G$, our goal
is to estimate the number of such maps. Our main graph results concern
paths and specified trees and forests on $k$ vertices. Throughout,
$n$ and $m$ denote the numbers of host vertices and edges or arcs, respectively.

\paragraph{Why approximate counting?}
Counting solutions can be substantially harder than finding one.
Valiant's theorem on the permanent gives a classical example:
determining whether a bipartite graph has a perfect matching is
polynomial-time solvable, whereas counting its perfect matchings is
$\#\mathrm P$-complete~\cite{DBLP:journals/tcs/Valiant79}.
The same distinction occurs when the size of the desired solution is
the parameter. An algorithm is fixed-parameter tractable if its running
time is $f(k)n^{\OO(1)}$ for some computable function $f$, where $n$ is the input length and $k$ is the parameter (assumed to be small) associated with the input.
The {\sc $k$-Path} decision problem admits such algorithms, for example
by color coding~\cite{DBLP:journals/jacm/AlonYZ95}. In contrast,
Flum and Grohe~\cite{FlumG04} showed that counting $k$-vertex paths
exactly is $\#\mathrm W[1]$-hard. Thus an exact fixed-parameter
algorithm for this counting problem is unlikely under the usual
parameterized counting assumption. Curticapean's
survey~\cite{DBLP:conf/iwpec/Curticapean18} gives broader background
on parameterized counting complexity.

Although exact counting is unlikely to admit a fixed-parameter
algorithm, approximate counting may still be possible in
fixed-parameter time. If the true count is $N$, we seek an estimate $\widehat N$
satisfying $(1-\varepsilon)N\le\widehat N\le(1+\varepsilon)N$
with probability at least $1-\delta$, for given
$0<\varepsilon,\delta<1$. Arvind and Raman~\cite{DBLP:conf/isaac/ArvindR02}
gave fixed-parameter approximation schemes for counting paths and
other patterns of bounded treewidth. Once such a scheme exists, a
central question is how small its exponential dependence on the
pattern size can be.

\paragraph{The running-time question for paths and trees.}
Alon et al.~\cite{DBLP:conf/ismb/AlonDHHS08} used color coding
to approximately count paths on $k$ vertices in
$\OO((2e)^k(n+m)\varepsilon^{-2}\log(2/\delta))$ time,
with relative error $\varepsilon$ and failure probability at most
$\delta$.
Alon and Gutner developed deterministic approximate counters using
balanced families of colorings~\cite{DBLP:journals/talg/AlonG10,DBLP:conf/iwpec/AlonG09}.  
Their improved construction~\cite{DBLP:conf/iwpec/AlonG09}
gives running time $(2e)^{k+\OO(\log^3 k)}(n+m)\log n$
for relative error $\varepsilon$ satisfying
$\varepsilon^{-1}=k^{\OO(1)}$.
Brand, Dell, and Husfeldt~\cite{BDH18} subsequently obtained a
$4^k\poly(k)(n+m)\varepsilon^{-2}$ randomized bound using exterior
algebra. Lokshtanov, Saurabh, and Zehavi~\cite[Theorem 1.1]{LSZ21}
improved the path base to $2.619$ using representative counters, and
obtained the same base for specified trees and, more generally,
patterns of any fixed treewidth. These developments concern
approximation; they do not remove the hardness of exact counting.

Koutis and Williams~\cite{DBLP:journals/cacm/KoutisW16} conjectured
that paths could be approximately counted in
$2^k\poly(n,\varepsilon^{-1})$ time.  
\footnote{$\poly(n,\varepsilon^{-1})
= n^{\OO(1)}\varepsilon^{-\OO(1)}$,
where the hidden constants are independent of $k$, $n$, and
$\varepsilon$.}
This gives a specific target for
the exponential dependence on $k$. Our path algorithm attains base
$2$, with a factor linear in $n+m$ apart from polynomial factors in
$k$ and the accuracy and confidence parameters. For specified trees
and forests, we obtain base $2+\eta$ for every fixed $\eta>0$.
Here the polynomial exponent in the host graph size $n+m$ depends on $\eta$. 
Bj\"orklund, Lokshtanov, Saurabh, and Zehavi~\cite{BLSZ21}
gave a deterministic approximate path counter that runs in
$4^{k+o(k)}(n+m)\log n$ time and polynomial space
for any fixed relative error. 
Our algorithms are randomized and use exponential space. 

\subsection{Our results}
We give precise statements below. Throughout, our approximation
guarantees use two parameters:
\begin{enumerate}
    \item $\varepsilon\in(0,1)$ is the relative error.
    If the true count  is $Z$, an estimate
    $\widehat{Z}$ is accurate when
    $(1-\varepsilon)Z\leq \widehat{Z}\leq(1+\varepsilon)Z$.

    \item $\delta\in(0,1)$ is the failure probability.
    The algorithm returns an accurate estimate with probability
    at least $1-\delta$, over its internal randomness.
\end{enumerate}
Thus, our guarantee is
$\Pr[|\widehat{Z}-Z|\leq\varepsilon Z]\geq 1-\delta$.
The running times below include the factor
$\varepsilon^{-2}\log(2/\delta)$: the dependence on the relative
error is quadratic in $1/\varepsilon$, and the dependence on the
failure probability is logarithmic in $1/\delta$.


\paragraph{Paths and forests.}
We first state our results for unweighted counting.

\begin{theorem}[{\sc Counting paths}]\label{thm:intro-path}
Let $G$ be a simple directed graph with $n$ vertices and $m$ arcs,
and let $0<\varepsilon,\delta<1$.
There is a randomized algorithm that approximates the number of
directed paths on $k$ vertices in $G$ within relative error
$\varepsilon$, with probability at least $1-\delta$, using
\begin{equation}\label{eq:intro-path-time}
  2^k k^{\OO(1)}(n+m)\varepsilon^{-2}\log(2/\delta)
\end{equation}
arithmetic operations.
\end{theorem}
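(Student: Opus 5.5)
We plan to follow the exterior-algebra framework of Brand, Dell, and Husfeldt~\cite{BDH18}, but to replace their $4^k$-cost extensor coordinates by a random-matrix estimator whose cost is governed by the dimension $2^k$ of the exterior algebra $\Lambda(\R^k)$ itself. Concretely, we assign to every host vertex $v$ a random vector $x_v\in\R^k$ with i.i.d.\ standard Gaussian entries, viewed as a degree-one element of $\Lambda(\R^k)$. For a walk $W=(v_1,\dots,v_k)$, the product $x_{v_1}\wedge\cdots\wedge x_{v_k}$ vanishes whenever a vertex repeats, and otherwise equals $\det[x_{v_1},\dots,x_{v_k}]\,e_1\wedge\cdots\wedge e_k$. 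Summing over walks, which is a path-ordered product along arcs, gives
\[
  S=\sum_{W}x_{v_1}\wedge\cdots\wedge x_{v_k}
  =\Bigl(\sum_{P}\det X_P\Bigr)e_{[k]},
\]
where $P$ ranges over directed $k$-paths. The coefficient has mean zero, so squaring it gives $\E[(\sum_P\det X_P)^2]=\sum_P\E[\det X_P^2]=k!\cdot\#\text{paths}$. Here the cross terms vanish because two distinct paths differ in their vertex sets or orderings, and the determinants are then either independent or carry signs that cancel in expectation. We therefore use $\widehat Z=S_{[k]}^2/k!$ as an unbiased estimator.

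\textbf{Evaluation in $2^kk^{\OO(1)}(n+m)$ operations.} We run the standard walk dynamic program with values in $\Lambda(\R^k)$. Set $A_1(v)=x_v$ and $A_{i+1}(v)=\sum_{(u,v)\in E}A_i(u)\wedge x_v$. Each element has $2^k$ coordinates. Wedging with a vector costs $\OO(k2^k)$, but we only need to wedge once per vertex per level: sum over in-arcs first, at cost $2^k$ per arc, then multiply by $x_v$. Hence one level costs $\OO(2^k(m+nk))$ and all $k$ levels cost $2^kk^{\OO(1)}(n+m)$. We can save a factor by tracking only grade-$i$ components at level $i$, but this is not needed.

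\textbf{Variance, the main obstacle.} Chebyshev's inequality combined with median-of-means turns a relative variance bound $\Var[\widehat Z]\le C\,\E[\widehat Z]^2$ into $\OO(C\varepsilon^{-2}\log(2/\delta))$ repetitions, so we need $C=k^{\OO(1)}$. For Gaussian determinants we have $\E[\det^4]/(\E[\det^2])^2=(k+2)!/(2\,k!)=\Theta(k^2)$, but correlations between overlapping paths could blow the ratio up exponentially. This is exactly where we will invoke the tensor-train moment bound of Rakhshan and Rabusseau~\cite{}. We can write $\sum_P\det X_P$ as a multilinear form evaluated on a sequence of Gaussian factors arranged in a tensor-train (MPS) structure, in which the $i$-th core is indexed by the position $i$ of the path. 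Their bound controls the fourth moment of such a form by $(1+\text{small})^{\text{length}}$ times the squared second moment. If that bound only yields $3^{k}$ or a similar exponential ratio with plain Gaussian vectors, our first fix is to make the randomness sparser. We replace $x_v$ by a product of Gaussian matrices per position, i.e.\ we use independent randomness for each position $i$ through a lift $x_v^{(i)}=x_v\otimes g_{v,i}$ with $g_{v,i}$ Gaussian in a dimension $d=\poly(k)$. Averaging over $d$ drives the per-core fourth-to-second moment ratio to $1+\OO(1/d)$, so the product over $k$ cores becomes $(1+\OO(1/d))^k=\OO(1)$ for $d=k$. This costs only a $k^{\OO(1)}$ factor in the running time. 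Once $C=k^{\OO(1)}$ is established, the stated bound \eqref{eq:intro-path-time} follows by multiplying the per-sample cost by the number of samples.
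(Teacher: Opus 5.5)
\textbf{There is a genuine gap at the unbiasedness step, and it is the central one.}

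For a fixed vertex set $U$, every path $P$ with $V(P)=U$ has $\det X_P=\pm\det X_U$ identically, because its columns are just permuted. So the cross term between two orderings of the same set has expectation $\pm k!$, not zero. Writing $c_U$ for the signed number of paths on $U$, you actually get $S_{[k]}=\sum_U c_U\det X_U$ and $\E S_{[k]}^2=k!\sum_U c_U^2$, which is a sum of squared signed counts, not the number of paths. Two examples show the failure:
\begin{itemize}
\item In the bidirected path $a-b-c$ with $k=3$, the paths $(a,b,c)$ and $(c,b,a)$ cancel, so $S_{[k]}\equiv 0$ for every choice of vectors.
\item In the bidirected path on four vertices, the two orientations have equal determinants, so your estimator has mean $4$ instead of $2$.
\end{itemize}
Cross terms between different vertex sets do vanish, but the reason is that some unshared vector enters one determinant linearly, not independence. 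Vertex vectors are reused at every position, so they cannot tell orderings apart. What is missing is a second source of randomness indexed by role--image pairs, so that any two distinct paths differ in some centered factor appearing in only one of them.

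Your fallback ``lift'', once made precise, is exactly this missing ingredient and exactly the paper's construction:
\begin{itemize}
\item an independent $D\times D$ matrix $M_{i,v}$ with $N(0,1/D)$ entries for each position $i$ and vertex $v$;
\item a row of length $D$ stored in each exterior coordinate and multiplied by $M_{p,v}$ after the wedge with $g_v$, rather than a tensor product of vectors, which would have $d^k$ coordinates;
\item output $Y=Z^2/k!$ with $Z=\sum_P\Delta_P\,\sqrt D\,b_0^\top M_{1,v_1}\cdots M_{k,v_k}b_0$.
\end{itemize}
However, this lift must be introduced to obtain unbiasedness. It is not a contingent fix for variance.

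The variance argument also has to be restructured. The Rakhshan--Rabusseau bound requires independent cores at each position. That fails for your plain estimator, since the same $x_v$ serves every position. The paper proceeds in three steps:
\begin{itemize}
\item It conditions on the vertex vectors and applies the tensor-train bound to the matrices, with the determinants as fixed coefficients. This gives $\E_M Z^4\le 3(1+2/D)^{k-1}\bigl(\sum_P\Delta_P^2\bigr)^2$.
\item It groups paths by vertex set and applies Cauchy--Schwarz with $\E\Delta_U^4=K_k(k!)^2$, where $K_k=(k+1)(k+2)/2$.
\item With $D=k$, this yields $\E Y^2\le 3e^2K_kN_k^2$.
\end{itemize}
The per-trial cost becomes $\OO(2^k(mD+nkD+nD^2))$, which is still $2^kk^{\OO(1)}(n+m)$.
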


\begin{theorem}[{\sc Counting forests}]\label{thm:intro-tree}
For every fixed $\eta>0$, there is a randomized algorithm
with the following guarantee.
Given a forest $F$ on $k$ vertices, a simple graph $G$ on $n$
vertices, and $0<\varepsilon,\delta<1$, the algorithm approximates
the number of embeddings of $F$ into $G$ within relative error
$\varepsilon$, with probability at least $1-\delta$, using
\begin{equation}\label{eq:intro-tree-time}
  (2+\eta)^k n^{\OO_\eta(1)}
  \varepsilon^{-2}\log(2/\delta)
\end{equation}
arithmetic operations.
The algorithm also applies when $F$ is a forest with each edge
assigned a direction and $G$ is directed.
For $0<\eta\leq 1$, the exponent of $n$ can be bounded by
$\OO(\eta^{-1}\log(2/\eta))$.
\end{theorem}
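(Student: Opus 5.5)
We plan to follow the exterior-algebra template behind \cref{thm:intro-path}, replacing the path dynamic program by a tree-shaped one and paying for the branching with a separator argument. Assign to each host vertex $v$ a random vector $x_v\in\R^{k}$ with i.i.d.\ standard Gaussian entries, and work in the exterior algebra $\Lambda(\R^k)$. Its top-degree component is one-dimensional, and $x_{v_1}\wedge\dots\wedge x_{v_k}=\det[x_{v_1},\dots,x_{v_k}]\,e_{1}\wedge\dots\wedge e_k$. Define the random polynomial
\[
P=\sum_{f}\ \bigwedge_{u\in V(F)} x_{f(u)},
\]
where $f$ ranges over homomorphisms $F\to G$ and the vertices of $F$ are taken in a fixed order. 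Any non-injective $f$ contributes $0$, since a repeated vector kills the wedge. Hence $P$ equals $\sum_{f\in\Emb(F,G)}\det(X_f)$ times the top generator. The estimator is $\widehat Z=P^2$, or the analogous squared norm when a non-commutative matrix variant is used. Since $\E[\det(X_f)\det(X_g)]=0$ unless $f$ and $g$ have the same image set with matching signs, we get $\E[\widehat Z]$ equal to the number of embeddings, after possibly symmetrizing over the ordering. This is the standard unbiasedness computation already used for paths.

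The second step is the variance bound. Here we would invoke the tensor-train moment bound of Rakhshan and Rabusseau, as in the path case, to show $\E[\widehat Z^2]\le C^{k}\,(\E\widehat Z)^2$ for a constant $C$. We choose the random model (Gaussian matrices in place of scalars per coordinate, as in the path result) so that $C$ is $1+o(1)$ per vertex, or at least so that the $2^k$ evaluation cost dominates. The median-of-means amplification with $\OO(\varepsilon^{-2})$ samples per group and $\OO(\log(2/\delta))$ groups then yields the $\varepsilon^{-2}\log(2/\delta)$ factor. This part should carry over verbatim from the path analysis, provided the estimator is again a tensor-train-structured product along the tree.

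The main obstacle is evaluating $P$ within $(2+\eta)^k n^{\OO_\eta(1)}$ operations. A naive tree dynamic program merges two subtrees by multiplying their partial wedge products, and a general product in $\Lambda(\R^k)$ costs $3^k$ or more rather than $2^k$. The path algorithm avoids this because it only ever wedges with a single vector.

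To handle the branching we use a small-component separator. Fix $t=\Theta(\eta^{-1}\log(2/\eta))$. We find a set $S\subseteq V(F)$ of $\OO(k/t)$ vertices whose removal leaves components of size at most $t$, and such that the components can be attached in a path-like (caterpillar) order. We then enumerate the images of $S$ in $G$, which costs $n^{\OO(|S|)}$. We must make the exponent of $n$ independent of $k$, so we would instead choose a separator with $\OO(1/\eta)$ vertices at each step of a sequential sweep. Each small component hanging off the separator is collapsed, by brute force in $n^{t}$ time, into a single precomputed element of $\Lambda$ of degree at most $t$. Wedging a degree-$\le t$ element onto the current state costs $2^k\binom{k}{\le t}$. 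We then need to argue this is at most $(1+\eta)^k$ amortized, via a budget argument showing that grade-$j$ sums restricted to the relevant degrees cost $\binom{k}{j}\binom{k-j}{t}$. The total is $\sum_j\binom{k}{j}\binom{k-j}{\le t}\le (2+\eta)^k$ once $t$ is small relative to $\eta k$.

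Directed forests require only orienting the adjacency used in the dynamic program. A forest with several components is handled by sweeping the components consecutively.
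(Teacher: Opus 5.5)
\textbf{There is a genuine gap in the evaluation step, and a second one in the estimator.}

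\textbf{Evaluation.} Your separator parameters are reversed, and neither regime you describe works.

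With component size $t=\Theta(\eta^{-1}\log(2/\eta))$ constant, the separator has $\Theta(k/t)$ vertices. Enumerating their images then costs $n^{\Theta(k/t)}$, not $n^{\OO_\eta(1)}$.

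Your repair, a sequential sweep that remembers only $\OO(1/\eta)$ separator images at a time, is not available for general trees. A sweep with a bounded set of remembered images is a bounded vertex-separation ordering, and complete binary trees have pathwidth $\Theta(\log k)$. The alternative, joining two partially processed branches, multiplies two exterior elements whose degrees can both be linear in $k$. That is exactly the $3^k$ cost you set out to avoid.

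The paper does the opposite. It fixes an integer $c=\OO(\eta^{-1}\log(2/\eta))$ with $\HH(1/c)\le\log_2(1+\eta/2)$. It then greedily marks at most $c$ pattern vertices, using one common threshold across all forest components, so that every component of $F-S$ has at most $h=\lfloor k/c\rfloor$ vertices. It enumerates the at most $n^{c}$ images of $S$.

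In this regime the components have linear size, so your brute-force collapse in $n^t$ time is also unavailable. Instead, each component table is computed by a rooted recurrence in $\Lambda(\R^k)\otimes\Mat_D(\R)$. The allowed image sets $L_i^\sigma$ encode the edges to $S$ and exclude separator images, and the component tables are multiplied in a fixed order. Every merge, inside or between components, then has one factor of exterior degree at most $h$. By the entropy bound it costs $k^{\OO(1)}D^3 2^k\sum_{j\le h}\binom kj\le k^{\OO(1)}D^3(2+\eta)^k$.

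\textbf{Estimator.} Your unbiasedness claim for the scalar $P$ is false. Two embeddings with the same image set have $\det X_f=\pm\det X_g$, so $\E[\det X_f\det X_g]=\pm k!$, not $0$. Such terms already cancel inside $P$: for $F$ a single edge, $P\equiv0$ identically, and no symmetrization over orderings repairs this.

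The matrix variant you only gesture at is essential, and it must be set up as follows:
\begin{itemize}
\item an independent $N(0,1/D)$ matrix $M_{i,v}$ for each pattern vertex $i$ and host vertex $v$;
\item one global order of pattern vertices (separator first, then preorders of the components), used for every $\sigma$;
\item the contributions of all separator assignments summed, so that $Z=\sum_{f\in\Emb(F,G)}\Delta_f\chi(f)$ with each embedding exactly once.
\end{itemize}

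The Rakhshan--Rabusseau lemma then controls the matrices for the arbitrary coefficient family $(\Delta_f)$. It does not control the vertex vectors. You still need $\E_g[\Delta_f^2\Delta_{f'}^2]\le K_k(k!)^2$ for overlapping embeddings, which follows from $\chi^2$ moments via Gram--Schmidt plus Cauchy--Schwarz.

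Finally, your target $\E[\widehat Z^2]\le C^k(\E\widehat Z)^2$ with a constant $C>1$ would multiply the number of trials by $C^k$ and destroy the $(2+\eta)^k$ base. What is needed, and what the paper proves with $D=k$, is the polynomial bound $3e^2K_k$ on the relative second moment.
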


For forests, the polynomial exponent depends on the fixed
constant $\eta>0$. Thus, the bound approaches base $2$ as
$\eta$ decreases, at the cost of a larger polynomial exponent.

\subsection{Our approach}
We first describe the construction for counting $k$ vertex paths.
Assign an independent Gaussian vector in $\R^k$ to each
host vertex. The exterior product along a walk is zero
whenever a vertex is repeated. For a path on $k$ distinct
vertices, the product is a scalar multiple of the unique
degree-$k$ basis element, and this scalar is the determinant
of the assigned vectors. Thus exterior multiplication
removes walks with repeated vertices. However, the
determinants of different paths may have opposite signs,
so their sum does not give the number of paths.

To recover the count, we assign an independent random matrix
$M_{i,v}$ to each path position $i$ and host vertex $v$,
independently of the vertex vectors.
For a path $(v_1,\ldots,v_k)$, we obtain a scalar signature
from the ordered product
$M_{1,v_1}\cdots M_{k,v_k}$.
We multiply each path's determinant by its signature
and sum these contributions.
When we square this sum, the cross terms between distinct
paths have expectation zero, even if the paths share vertices.
The squared contribution of each individual path has the
same positive expectation. Dividing by this common value
therefore gives an unbiased estimator of the number of paths.

An unbiased estimator is useful only if its variance
is small enough that averaging over a small number of trials
gives an accurate estimate.  Conditional on the vertex vectors,
we apply the tensor-train moment bound of Rakhshan and
Rabusseau~\cite[Theorem 1 and Section 5.1]{RR20}
to the sum with determinant coefficients.
Combining this bound with Gaussian determinant moments
shows that one trial $X$ satisfies
\[
  \frac{\E[X^2]}{\E[X]^2}=\OO(k^2)
\]
whenever the count is positive.
Averaging independent trials and taking a median of these
averages gives relative error $\varepsilon$ with failure
probability at most $\delta$ using
$\OO(k^2\varepsilon^{-2}\log(2/\delta))$ trials.

\paragraph{Evaluating the estimator for paths and forests.}
For paths, a dynamic program sums the contributions of all
walks, while exterior multiplication removes those with
repeated vertices. Each transition appends one vertex and
multiplies the current exterior element by its vector.
There are $\binom{k}{d}$ exterior coordinates in degree $d$,
and
\[
  \sum_{d=0}^{k}\binom{k}{d}=2^k.
\]
Updating these coordinates and the auxiliary matrices takes
only polynomial additional work, giving the $2^k$ dependence.

For forests, combining partial embeddings requires multiplying
two exterior elements, rather than appending a single vector.
We use the small-component separator of Fomin, Lokshtanov,
Panolan, and Saurabh~\cite[Section 5.5.2, Lemma 5.18]{FLPS16}.
We enumerate the images of the separator vertices and process
the remaining components separately. The components are small
enough that each merge has a factor of low exterior degree,
which limits the number of coordinate pairs that must be
multiplied. Choosing the component-size threshold in terms
of $\eta$ gives the $(2+\eta)^k$ dependence; enumerating
separator images accounts for the $\eta$-dependent polynomial
exponent in the host size.
All separator assignments use the same randomness, and we
add their contributions before squaring.

The circuit bound improves the exponential base and the
$\varepsilon$ dependence of the bound
$\OO((3.841^k+s^{o(1)})\varepsilon^{-6}s)$
in~\cite[Theorem 1.2]{LSZ21}, while its displayed polynomial dependence
on normalized circuit size is larger. Theorem 1.3 of that work also
gives a $2.619^k$ base for skewed circuits.
Our coefficient-sum objective agrees with counting distinct multilinear
monomials when their coefficients are zero or one.

For weighted graph problems, summing products of weights differs from
minimizing an additive cost. Nederlof's dynamic representative
sets~\cite{Nederlof25} give an almost-$2^k$ deterministic algorithm
for minimum-weight directed paths; our algorithms estimate a total
weight. Related random projections include Gaussian tensor-network
embeddings~\cite{MS22} and Rademacher tensor-train
projections~\cite{RR22}. Our finite implementations and the
signed-permutation circuit construction are proved directly for the
counting objectives stated here.

%

\section{Preliminaries}
\label{sec:preliminaries}

For a positive integer $r$, write $[r]=\{1,\ldots,r\}$.
For a finite set $U$, let $\binom{U}{p}$ denote the family of its
$p$-element subsets. We use $S_p$ for the set of permutations of $[p]$
and $\operatorname{sgn}(\pi)\in\{-1,1\}$ for the sign of a
permutation $\pi\in S_p$. Unless a base is indicated, logarithms in
running-time bounds may be taken to base two.

\subsection{Graphs, paths, and embeddings}
All graphs are finite and simple. An undirected graph has no loops or
parallel edges. A directed graph has no loops or parallel arcs, but it
may contain both $(u,v)$ and $(v,u)$. For a host graph $G$, we write
$n=|V(G)|$ and $m=|E(G)|$; in a directed graph, $m$ counts arcs.

A walk on $p$ vertices is a sequence $(v_1,\ldots,v_p)$ in which each
consecutive pair is an edge. In a directed graph we require
$(v_i,v_{i+1})\in E(G)$ for every $i\in[p-1]$.
A walk is a path if its vertices are pairwise distinct.
Thus a path on $k$ vertices has $k-1$ edges or arcs.
We count directed paths as ordered vertex sequences.
A forest is an undirected graph with no cycles; it may be disconnected
and may have isolated vertices. A tree is a nonempty connected forest.
An orientation of a forest assigns one direction to each of its edges.
The underlying undirected graph is still a forest.

\begin{definition}[Embedding]
\label{def:prelim-embedding}
Let $F$ be a pattern graph and $G$ a host graph. An embedding of $F$
into $G$ is an injective map $f:V(F)\to V(G)$ that preserves edges:
\[
  ij\in E(F)\quad\Longrightarrow\quad f(i)f(j)\in E(G).
\]
For directed graphs, the map must preserve the direction of every arc.
We write $\Emb(F,G)$ for the set of these embeddings.
\end{definition}

Our embeddings are non-induced: no condition is imposed on the images
of nonadjacent pattern vertices. In particular, extra host edges between
image vertices are allowed. The pattern is specified as part of the
input. Counting embeddings means counting the maps in $\Emb(F,G)$,
with the pattern vertices distinguished. This may differ from counting
subgraphs isomorphic to $F$. For example, a pattern consisting of one
undirected edge has two embeddings onto each host edge, corresponding
to its two endpoint assignments.

We use $k=|V(F)|$ for the pattern size. If $k>n$, there are no
embeddings. The pattern with no vertices has one embedding, the empty
map. 

\subsection{Exterior algebra}
We define the exterior algebra in coordinates over $\R$, the field used
in our analysis. Fix a dimension $d$ and let $e_1,\ldots,e_d$ be the
standard basis of $\R^d$. The dimension $d$ will be $k$ in our algorithms.

For each subset $S\subseteq[d]$, introduce a basis symbol $e_S$.
The exterior algebra $\Lambda(\R^d)$ is the vector space whose elements
are formal linear combinations
\[
  x=\sum_{S\subseteq[d]}x[S]e_S,
  \qquad x[S]\in\R.
\]
Here $x[S]$ is the coordinate of $x$ indexed by $S$.
We identify $e_{\{i\}}$ with $e_i$ and write $e_\varnothing=1$.
Thus the original vector space $\R^d$ is the subspace spanned by the
singleton basis elements. In general, an exterior element has
coordinates indexed by subsets, rather than just singleton indices.

\begin{definition}[Exterior product]
\label{def:prelim-exterior-product}
For $A,B\subseteq[d]$, let
\[
  \iota(A,B)=|\{(a,b)\in A\times B:a>b\}|.
\]
Define the exterior product, also called the wedge product, on basis
elements by
\[
  e_A\wedge e_B=
  \begin{cases}
    (-1)^{\iota(A,B)}e_{A\cup B},& A\cap B=\varnothing,\\
    0,& A\cap B\ne\varnothing,
  \end{cases}
\]
and extend it bilinearly to all exterior elements. Explicitly,
\[
  x\wedge y=
  \sum_{A,B\subseteq[d]}x[A]y[B](e_A\wedge e_B).
\]
\end{definition}

For disjoint sets, the sign records the number of swaps needed to put
the concatenation of their increasing orders into increasing order.
For example,
\[
  e_{\{1,3\}}\wedge e_2=-e_{\{1,2,3\}},
  \qquad
  e_{\{1,3\}}\wedge e_3=0.
\]
The element $1=e_\varnothing$ is a multiplicative identity.

\paragraph{Associativity.}
The exterior product is associative:
\[
  (x\wedge y)\wedge z=x\wedge(y\wedge z).
\]
To check this, bilinearity reduces the claim to three basis elements
$e_A,e_B,e_C$. If any two of $A,B,C$ intersect, both products are zero.
Otherwise, the sign exponent in either product is
\[
  \iota(A,B)+\iota(A,C)+\iota(B,C).
\]
We may therefore write a product of several exterior elements without
specifying parentheses. The order of the factors still matters.

\paragraph{Degree and dimension.}
The $p$-th exterior power, denoted by $\Lambda^p(\R^d)$, is the
subspace spanned by the $e_S$ with $|S|=p$.
An element of this subspace is homogeneous of exterior degree $p$.
In particular,
\[
  \Lambda^0(\R^d)=\R,
  \qquad \Lambda^1(\R^d)=\R^d,
  \qquad \dim\Lambda^p(\R^d)=\binom{d}{p}.
\]
Grouping the basis elements by their subset sizes gives
\[
  \Lambda(\R^d)=\bigoplus_{p=0}^d\Lambda^p(\R^d),
  \qquad
  \dim\Lambda(\R^d)=\sum_{p=0}^d\binom{d}{p}=2^d.
\]
The product of elements of degrees $p$ and $q$ lies in degree $p+q$;
it is zero if $p+q>d$.

\paragraph{Order and repeated vectors.}
For $x\in\Lambda^p(\R^d)$ and $y\in\Lambda^q(\R^d)$, the basis
multiplication rule gives
\[
  x\wedge y=(-1)^{pq}y\wedge x.
\]
This is graded commutativity. For ordinary vectors $u,v\in\R^d$,
which have degree one, it gives
\[
  u\wedge v=-v\wedge u,
  \qquad v\wedge v=0.
\]
The second equality is the alternating property. If a longer product
of vectors contains the same vector twice, swap adjacent vectors until
the two copies are next to each other. Each swap changes only the sign,
and the adjacent repeated pair has product zero. Thus repeated vectors
make the entire product zero, regardless of their positions.

More generally, for vectors $v_1,\ldots,v_p\in\R^d$ and
$\pi\in S_p$,
\[
  v_{\pi(1)}\wedge\cdots\wedge v_{\pi(p)}
  =\operatorname{sgn}(\pi)\,
    v_1\wedge\cdots\wedge v_p.
\]
The product is multilinear in its vector arguments. For example,
\[
  (au+bv)\wedge w
  =a(u\wedge w)+b(v\wedge w)
  \qquad(a,b\in\R).
\]
An arbitrary element of $\Lambda^p(\R^d)$ is a linear combination
of products of $p$ vectors; it need not itself be a single such product.

\paragraph{The determinant coordinate.}
There is only one basis element in exterior degree $d$, namely
$e_{[d]}$. For $d$ vectors $v_1,\ldots,v_d\in\R^d$,
\[
  v_1\wedge\cdots\wedge v_d
  =\det[v_1\ \cdots\ v_d]e_{[d]}.
\]
Indeed, expanding by multilinearity leaves precisely the signed
permutation terms in the determinant. In lower degrees, the coordinates
are the corresponding minors; Lemma~\ref{lem:wedge-coordinate-prelim}
gives that formula and its proof.
Consequently, the product $v_1\wedge\cdots\wedge v_p$ is zero if
and only if the list of vectors $v_1,\ldots,v_p$ is linearly dependent.

\subsection{Matrix and probability notation}
For a real column vector $x$, write $\norm{x}$ for its Euclidean norm.
For a real matrix $A$, let $A^\top$ denote its transpose and let
\[
  \norm{A}_F^2=\sum_{i,j}A_{ij}^2
\]
be its squared Frobenius norm. For a square matrix $A$, write
$\tr(A)=\sum_i A_{ii}$ for its trace and $I_D$ for the identity matrix
of dimension $D$. The auxiliary matrix dimension $D$ is distinct from
the exterior dimension $k$; its value will be specified for each
construction.

We write $\E[X]$ for expectation and
$\Var(X)=\E[(X-\E[X])^2]$ for variance.
A standard Gaussian scalar has distribution $N(0,1)$, with mean zero
and variance one. A standard Gaussian vector has independent standard
Gaussian coordinates. We specify the entry variances and independence
conditions whenever Gaussian matrices are sampled.
The Gaussian moment identities needed in the analysis are proved in
Section~\ref{sec:moments}.

An estimator $Y$ for a count or total weight $W$ is unbiased if
$\E[Y]=W$. When $W>0$, its relative second moment is
\[
  \frac{\E[Y^2]}{W^2}
  =1+\frac{\Var(Y)}{W^2}.
\]
Thus a bound on the relative second moment also bounds the variance
relative to the square of the quantity being estimated.
For $0<\varepsilon,\delta<1$, a relative approximation must satisfy
\[
  \Pr\bigl[|\widehat W-W|\le\varepsilon W\bigr]\ge1-\delta.
\]
Here $\varepsilon$ controls the relative error and $\delta$ the
failure probability. We handle $W=0$ separately, without forming a
relative second-moment ratio.
\section{From walks to an exterior-algebra sieve}\label{sec:path-motivation}
We begin with unweighted directed paths. The purpose of this first
construction is to separate two questions: how to remove walks that
repeat a vertex, and how to count the paths that remain. Exterior algebra
answers the first question. The random estimator will answer the second. 

Let $G=(V,E)$ be a simple directed graph with $n$ vertices and $m$ arcs.
A $p$-vertex walk is a sequence $(v_1,\ldots,v_p)$ with
$(v_i,v_{i+1})\in E$ for each $i<p$. It is a path when its vertices are
distinct. Write $\calP_k$ for the ordered $k$-vertex paths and
$N_k=|\calP_k|$. Different vertex orders count as different paths.
We return zero if $k>n$, and return $n$ if $k=1$. In the construction
below, $2\le k\le n$.

\subsection{The algebraic rule that enforces distinctness}
Give every host vertex $v$ a vector $g_v\in\R^k$. For the moment these
vectors are arbitrary. We want a product that becomes zero whenever a
walk uses the same vector twice, even when the two occurrences are far
apart. The exterior product has exactly this property.

Let $e_1,\ldots,e_k$ be the standard basis of $\R^k$. The exterior algebra
$\Lambda(\R^k)$ is generated by these vectors, with a bilinear product
satisfying $e_i\wedge e_i=0$ and $e_i\wedge e_j=-e_j\wedge e_i$.
For $S=\{s_1<\cdots<s_p\}\subseteq[k]$, put
$e_S=e_{s_1}\wedge\cdots\wedge e_{s_p}$ and $e_\varnothing=1$.
The $e_S$ with $|S|=p$ form the basis of degree $p$, whose dimension is
$\binom{k}{p}$. The whole algebra has dimension $2^k$.


In particular, $g\wedge g=0$ for every vector $g$. If a longer product
contains $g$ twice, moving one occurrence next to the other changes only
its sign; the resulting product is zero. Thus
$g_a\wedge g_b\wedge g_a=0$, just as $g_a\wedge g_a=0$.
This is an exact identity, not a probabilistic test.

For later coordinate calculations, define
$\iota(A,B)=|\{(a,b)\in A\times B:a>b\}|$. The basis multiplication is
\begin{equation}\label{eq:wedge-product}
 e_A\wedge e_B=
 \begin{cases}
 (-1)^{\iota(A,B)}e_{A\cup B},&A\cap B=\varnothing,\\
 0,&A\cap B\ne\varnothing.
 \end{cases}
\end{equation}
The exponent counts the swaps needed to put the concatenated coordinates
in increasing order. For example, $e_{\{1,3\}}\wedge e_2=-e_{\{1,2,3\}}$.

\begin{lemma}[Coordinate formula for a wedge product]\label{lem:wedge-coordinate-prelim}
Let $a_1,\ldots,a_p\in\R^k$.  For every $p$-set $S=\{i_1<\cdots<i_p\}\subseteq[k]$, let $A_S$ be the $p\times p$ matrix whose $j$-th column is the restriction of $a_j$ to the coordinates in $S$.  Then
\[
  a_1\wedge\cdots\wedge a_p
  =
  \sum_{S\in\binom{[k]}{p}} \det(A_S)e_S.
\]
In particular, for $p=k$,
\[
  a_1\wedge\cdots\wedge a_k
  =
  \det(a_1,\ldots,a_k)e_{[k]}.
\]
\end{lemma}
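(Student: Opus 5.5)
We expand each vector in the standard basis and use multilinearity of the wedge product. Write $a_j=\sum_{r=1}^k a_j[r]e_r$. Then
\[
  a_1\wedge\cdots\wedge a_p
  =\sum_{(r_1,\ldots,r_p)\in[k]^p}
   a_1[r_1]\cdots a_p[r_p]\,e_{r_1}\wedge\cdots\wedge e_{r_p}.
\]
By the alternating property, every tuple with a repeated index contributes zero, because a repeated basis vector kills the product. So only injective tuples survive.

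Next we group the injective tuples by their image set $S=\{i_1<\cdots<i_p\}$. Each such tuple has the form $r_j=i_{\sigma(j)}$ for a unique $\sigma\in S_p$. The permutation rule for products of vectors then gives
\[
  e_{i_{\sigma(1)}}\wedge\cdots\wedge e_{i_{\sigma(p)}}
  =\operatorname{sgn}(\sigma)\,e_{i_1}\wedge\cdots\wedge e_{i_p}
  =\operatorname{sgn}(\sigma)\,e_S.
\]
The last equality needs $e_{i_1}\wedge\cdots\wedge e_{i_p}=e_S$ for the increasing enumeration. This follows from repeated use of \eqref{eq:wedge-product}: at each step $\iota(\{i_1,\ldots,i_{t}\},\{i_{t+1}\})=0$, since every earlier index is smaller. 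Associativity lets us drop the parentheses.

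The coefficient of $e_S$ is therefore $\sum_{\sigma\in S_p}\operatorname{sgn}(\sigma)\prod_{j}a_j[i_{\sigma(j)}]$. The $(\ell,j)$ entry of $A_S$ is $a_j[i_\ell]$, so this sum is $\sum_\sigma\operatorname{sgn}(\sigma)\prod_j (A_S)_{\sigma(j),j}$. That is exactly the Leibniz formula for $\det(A_S)$, expanded along columns, which is the same as $\det(A_S^\top)$.

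For $p=k$, the only $p$-subset is $S=[k]$, and $A_{[k]}$ is the matrix with columns $a_1,\ldots,a_k$. This gives the second claim.

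The only delicate point is the sign bookkeeping: matching the sign of the rearrangement with $\operatorname{sgn}(\sigma)$ rather than $\operatorname{sgn}(\sigma^{-1})$. Both conventions give the same value, since $\operatorname{sgn}(\sigma)=\operatorname{sgn}(\sigma^{-1})$, so the identification with the determinant does not depend on the convention chosen.
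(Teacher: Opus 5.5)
Your proof is correct and takes the same route as the paper's: expand by multilinearity, discard tuples with a repeated index, index the surviving terms for each set $S$ by permutations, and recognize the Leibniz formula for $\det(A_S)$. Your additional checks, that $e_{i_1}\wedge\cdots\wedge e_{i_p}=e_S$ for the increasing order and that the $\sigma$ versus $\sigma^{-1}$ sign convention does not matter, supply details the paper leaves implicit.
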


\begin{proof}
Write $a_j=\sum_i a_j(i)e_i$.  By multilinearity,
\[
  a_1\wedge\cdots\wedge a_p
  =
  \sum_{i_1,\ldots,i_p} a_1(i_1)\cdots a_p(i_p)
  e_{i_1}\wedge\cdots\wedge e_{i_p}.
\]
All terms with repeated indices vanish.  For a fixed set $S=\{s_1<\cdots<s_p\}$, the surviving terms using exactly the indices of $S$ are indexed by permutations $\pi\in S_p$ and contribute
\[
  \sum_{\pi\in S_p}\operatorname{sgn}(\pi)
  \prod_{j=1}^p a_j(s_{\pi(j)})\, e_S
  =
  \det(A_S)e_S.
\]
Summing over all $S$ proves the formula.  The case $p=k$ has only the basis vector $e_{[k]}$.
\end{proof}

This formula explains what the top exterior coordinate means. There is
only one degree-$k$ basis vector, $e_{[k]}$. A path contributes its
determinant as the coefficient of this vector. Intermediate coordinates
are minors of the corresponding vector matrix.

\subsection{A dynamic program that sums over walks}
\label{subsec:warmup-exterior-detection}
Let $B_{p,v}\in\Lambda^p(\R^k)$ be the state for walks with $p$ vertices
ending at $v$. Set
\[
 B_{1,v}=g_v,\qquad
 B_{p,v}=\left(\sum_{u:(u,v)\in E}B_{p-1,u}\right)\wedge g_v.
\]
Every walk ending at $v$ has a unique previous endpoint $u$. Induction
therefore gives one term $g_{v_1}\wedge\cdots\wedge g_{v_p}$ for every
walk $(v_1,\ldots,v_p)$ ending at $v$. A repeated vertex makes that term
zero. After the final layer,
\[
 \sum_v B_{k,v}
 =\left(\sum_{P\in\calP_k}\Delta_P\right)e_{[k]},
 \qquad \Delta_P=\det(g_{v_1},\ldots,g_{v_k}).
\]
We append vectors on the right, so determinants follow the path order.
We use this convention throughout the graph algorithms.

The recurrence never stores a set of visited host vertices. It stores
exterior coordinates, and the alternating product handles distinctness.
This is why the algebra can replace explicit enumeration of paths.

\subsection{Why this sum is not a count}
The surviving determinants have different signs and magnitudes. They
can cancel. For example, take the three-vertex path $a-b-c$ with both
directions of each edge. Its two ordered three-vertex paths are
$(a,b,c)$ and $(c,b,a)$. Their determinants are negatives of one another,
so the displayed sum is zero for every choice of vectors.

Formal variables can prevent this cancellation for detection. Multiply
the contribution of role $p$ sent to $v$ by an indeterminate $y_{p,v}$.
Distinct paths have distinct products $\prod_p y_{p,v_p}$. Whenever a
path exists, its determinant coefficient is a nonzero polynomial in the
vector coordinates. This explains the usual algebraic detection idea:
evaluate a polynomial that is nonzero precisely when some desired term
survives. Its numerical value is still not the number of those terms.

To count, we would like each path to contribute one unit in expectation.
Squaring the final signed sum suggests a route, but also creates a new
problem: products $\Delta_P\Delta_Q$ for different paths. We next arrange
for these cross terms to have expectation zero, and then control the
variance of the resulting estimator.

\section{Designing the counting estimator}\label{sec:estimator}
\subsection{Why the exterior vectors are random}
\label{subsec:why-random-vectors}
One could assign a separate formal exterior basis vector to every host
vertex. This makes the distinctness rule transparent, but the coordinates
through degree $k$ would number $\sum_{p=0}^k\binom np$. We instead work
in the fixed $2^k$-dimensional algebra above.

The assignment $v\mapsto g_v\in\R^k$ preserves the identity that kills
repetitions. What it loses is a uniform numerical contribution from
different surviving paths. Independent standard Gaussian vectors restore
that uniformity in expectation: for every fixed simple path $P$,
$\E_g\Delta_P^2=k!$. Thus the same normalization works for every path.

Here a standard Gaussian vector means that its coordinates are independent
$N(0,1)$ variables. The determinant identity can already be seen from its
permutation expansion. In the square, two different permutation terms
leave a centered Gaussian entry occurring only once, and their expectation
is zero. Each of the $k!$ diagonal terms has expectation one. A later
moment calculation will also bound $\E_g\Delta_P^4$.

Random vectors alone do not solve cancellation: paths using the same
vertices can have equal determinants up to sign. We need a second random
ingredient to distinguish the paths themselves.

\subsection{A small matrix signature for each path}
Fix an auxiliary dimension $D$. Independently of all $g_v$, sample a
matrix $M_{p,v}\in\R^{D\times D}$ for every position $p\in[k]$ and host
vertex $v$. Its entries are independent $N(0,1/D)$ variables. Different
role--image pairs receive independent matrices.

Let $b_0=e_1\in\R^D$ be the first auxiliary basis vector. It is a fixed
starting and ending vector, not a graph vertex and not a random choice.
For $P=(v_1,\ldots,v_k)$, define
\begin{equation}\label{eq:path-signature-human}
 \chi(P)=\sqrt D\,b_0^\top M_{1,v_1}\cdots M_{k,v_k}b_0.
\end{equation}
The product has one matrix for every path position, including the first.
The row $b_0^\top$ is propagated from left to right. At the end,
multiplication by $b_0$ reads its first coordinate, and $\sqrt D$
normalizes that scalar projection.

For distinct paths, some position has different images. The matrix for
that role--image pair appears in one signature and not the other.
Conditioning on every other matrix makes the product linear in a centered
matrix, so $\E_M[\chi(P)\chi(Q)]=0$ for $P\ne Q$.
Also, $\E_M\chi(P)^2=1$. To see the normalization, condition on all but
the last matrix. For the preceding row $r$, the scalar $rM_{k,v_k}b_0$
has variance $\|r\|^2/D$. Multiplication by $D$ removes this factor,
and earlier normalized Gaussian matrices preserve the expected squared
row norm, starting from $\|b_0\|^2=1$.

The word ``signature'' does not mean that its value is $+1$ or $-1$.
It is a real random variable. The useful properties are its second
moments across paths and its fourth-moment bound across arbitrary sums.

\subsection{What one trial should output}
Our target amplitude and nonnegative estimator are
\begin{equation}\label{eq:unweighted-estimator-human}
 Z=\sum_{P\in\calP_k}\Delta_P\chi(P),\qquad Y=Z^2/k!.
\end{equation}
This display defines the desired result of a trial; it is not an instruction
to enumerate paths. The next section computes this sum by dynamic programming.

Condition first on all exterior vectors. The cross terms vanish over the
matrices, giving
\[
 \E_M[Z^2\mid(g_v)_v]=\sum_{P\in\calP_k}\Delta_P^2.
\]
Taking expectation over the vectors and dividing by $k!$ gives
$\E Y=N_k$. The accounting is now explicit: nonsimple walks contribute
zero identically, distinct paths have zero cross terms in expectation,
and each diagonal path term contributes one expected unit.

\subsection{Two overlapping paths}
Consider the directed graph with arcs $ab,bc,ad,dc$. There are exactly
two three-vertex paths, $P=(a,b,c)$ and $Q=(a,d,c)$.
\begin{figure}[htbp]
\centering
\begin{tikzpicture}[v/.style={circle,draw,minimum size=7mm},>=Latex]
\node[v] (a) at (0,0) {$a$};
\node[v] (b) at (2,0.8) {$b$};
\node[v] (d) at (2,-0.8) {$d$};
\node[v] (c) at (4,0) {$c$};
\draw[->] (a)--(b); \draw[->] (b)--(c);
\draw[->] (a)--(d); \draw[->] (d)--(c);
\end{tikzpicture}
\caption{The two paths share their first and last vertices. Their random
signatures share the matrices for those roles as well.}
\label{fig:overlapping-paths}
\end{figure}
Their signatures are
$\sqrt D\,b_0^\top M_{1,a}M_{2,b}M_{3,c}b_0$ and
$\sqrt D\,b_0^\top M_{1,a}M_{2,d}M_{3,c}b_0$.
They are generally dependent: the first and last matrices are shared.
Nevertheless, conditioning on every matrix except $M_{2,b}$ proves that
their product has expectation zero. Likewise, their determinants share
$g_a$ and $g_c$ and need not be independent.

For this graph, $Z=\Delta_P\chi(P)+\Delta_Q\chi(Q)$. Conditional
expectation removes the cross term, leaving $\Delta_P^2+\Delta_Q^2$.
Each squared determinant has expectation $3!=6$, so $\E[Y]=2$.
The same calculation applies with more paths; it never asserts their
independence.

Why must the shared matrices actually be shared in the algorithm?
The state at a vertex represents a sum of partial paths. Applying one
matrix to this state distributes that matrix over the whole sum. This
allows the computation to stay small. Assigning fresh randomness to every
partial path would require distinguishing those partial paths explicitly.
The moment proof must accommodate the reuse that makes the dynamic
program efficient.

\subsection{Why cancellation in expectation is not enough}
An unbiased estimator can still fluctuate too much for efficient
approximation. We need a bound on $\E Y^2$, equivalently on the fourth
moment of $Z$. Pairwise cancellation alone does not give that bound.

For a small illustration, take independent fair signs $s_{p,0},s_{p,1}$
and form the scalar signature sum over all binary words:
$T=\prod_{p=1}^k(s_{p,0}+s_{p,1})$. Each factor is zero with probability
$1/2$, and otherwise is $+2$ or $-2$. Hence
$\E T^2=2^k$, $\E T^4=8^k$, and
$\E T^4/(\E T^2)^2=2^k$. This example concerns a coefficient sum;
it shows why orthogonality of word signatures alone does not establish
the uniform moment bound needed by our proof.

Matrices provide a dimension with which to control this accumulation.
One normalized Gaussian matrix multiplies the fourth moment of a fixed
vector norm by $1+2/D$, rather than by a fixed factor larger than one.
The full matrix-product lemma proves a corresponding bound for sums
with shared prefixes. Its loss is at most $3(1+2/D)^{k-1}$.
Taking $D=k$ makes this constant. The exterior determinants contribute
the remaining factor $K_k=(k+1)(k+2)/2$.

\begin{center}
\small
\begin{tabular}{@{}p{0.28\linewidth}p{0.64\linewidth}@{}}
\toprule
Ingredient & Its task\\
\midrule
Exterior product & Eliminate every repeated vertex exactly.\\
Gaussian vertex vectors & Make each surviving squared determinant have mean $k!$.\\
Role--image matrices & Cancel cross terms in expectation while controlling the fourth moment of the sum.\\
Independent trials & Turn the second-moment bound into a relative approximation.\\
\bottomrule
\end{tabular}
\end{center}

\section{Computing the path estimator}\label{sec:coordinate-implementation}
We now implement~\eqref{eq:unweighted-estimator-human}. All random
choices in this section are fixed for the duration of a trial.
Appendix~\ref{app:path-implementation} expands the implementation below,
including the individual table entries and all running-time sums.

\subsection{What a state stores}
The state $X_{p,v}$ has exterior degree $p$ and one auxiliary row of
length $D$ for each exterior coordinate:
\[
 X_{p,v}=\sum_{|S|=p}e_S\otimes X_{p,v}[S],
 \qquad X_{p,v}[S]\in\R^{1\times D}.
\]
Here $p$ is the number of path vertices already used and $v$ is the
current endpoint. Crucially, $S\subseteq[k]$ is a set of vector-coordinate
indices, not a set of host vertices. Many partial paths contribute to
the same row $X_{p,v}[S]$. No list of those paths is stored.

For instance, at degree two with $k=3$, the exterior coordinates are
$e_{\{1,2\}},e_{\{1,3\}},e_{\{2,3\}}$. A state at a fixed endpoint
stores three rows of length $D$. The contribution of a partial path
$(u,v)$ to the first row is the $\{1,2\}$ minor of $[g_u\ g_v]$
times $b_0^\top M_{1,u}M_{2,v}$. The other two rows use the other minors.

\subsection{The recurrence and the order of operations}
Initialize $X_{1,v}=g_v\otimes b_0^\top M_{1,v}$. At layer $p\ge2$, set
\begin{equation}\label{eq:unweighted-path-recurrence}
 U_{p,v}=\sum_{u:(u,v)\in E}X_{p-1,u},\qquad
 X_{p,v}=(U_{p,v}\wedge g_v)M_{p,v}.
\end{equation}
The wedge acts on the exterior coordinates; the matrix acts on their
auxiliary rows. They can be performed in either order because they act
on different factors. We use the displayed order consistently.
Because the matrix depends on $(p,v)$, all incoming contributions can
be added before applying it.

\begin{figure}[htbp]
\centering
\begin{tikzpicture}[box/.style={draw,rounded corners,align=center,
 minimum height=10mm,text width=37mm,font=\small},>=Latex]
\node[box,fill=softblue] (sum) {Sum over incoming arcs\\$U_{p,v}=\sum_{u:uv\in E}X_{p-1,u}$};
\node[box,fill=softgreen,right=9mm of sum] (wedge) {Append the vertex\\$U_{p,v}\wedge g_v$};
\node[box,fill=softorange,right=9mm of wedge] (matrix) {Update the signature\\$(U_{p,v}\wedge g_v)M_{p,v}$};
\draw[->,thick] (sum)--(wedge); \draw[->,thick] (wedge)--(matrix);
\end{tikzpicture}
\caption{One endpoint update. The same sampled matrix $M_{p,v}$ is used
for every exterior coordinate and every incoming contribution.}
\label{fig:layered-dp}
\end{figure}

At degree $k$, only the coordinate $[k]$ remains. Form the row
$z=\sum_v X_{k,v}[[k]]$, where the double brackets indicate that unique
top coordinate. Output $Y=D(zb_0)^2/k!$. Thus the algorithm extracts a
scalar only after it has summed all endpoints.

\begin{lemma}[Expansion of the path recurrence]
\label{lem:human-path-expansion}
For every fixed choice of vectors and matrices,
\[
 X_{p,v}=\sum_{\substack{(v_1,\ldots,v_p)\text{ a walk}\\v_p=v}}
 (g_{v_1}\wedge\cdots\wedge g_{v_p})\otimes
 b_0^\top M_{1,v_1}\cdots M_{p,v_p}.
\]
Consequently the scalar $Z=\sqrt D\,zb_0$ is exactly the amplitude
in~\eqref{eq:unweighted-estimator-human}.
\end{lemma}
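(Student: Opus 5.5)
We prove the displayed expansion by induction on $p$. Both sides lie in $\Lambda^p(\R^k)\otimes\R^{1\times D}$, and we treat the state as an element of this tensor product. The wedge acts on the left factor and right multiplication by a $D\times D$ matrix acts on the right factor. The two operations commute and are each linear, so for pure tensors
\[
 \bigl((x\otimes r)\wedge g\bigr)M=(x\wedge g)\otimes(rM),
\]
and this identity extends to sums by linearity. This is the only structural fact needed beyond the recurrence itself.

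For the base case $p=1$, a one-vertex walk ending at $v$ is just $(v)$. The right side is then $g_v\otimes b_0^\top M_{1,v}$, which is the initialization $X_{1,v}$. For the inductive step, assume the formula for $p-1$ at every endpoint $u$. Substituting it into $U_{p,v}=\sum_{u:(u,v)\in E}X_{p-1,u}$ gives a double sum: over arcs $(u,v)$, and over walks $(v_1,\ldots,v_{p-1})$ with $v_{p-1}=u$. We use the map
\[
 \bigl((v_1,\ldots,v_{p-1}),(u,v)\bigr)\mapsto(v_1,\ldots,v_{p-1},v).
\]
It is a bijection onto the walks on $p$ vertices ending at $v$, since each such walk has a unique penultimate vertex $v_{p-1}$ and $(v_{p-1},v)\in E$. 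So $U_{p,v}$ is the sum, over $p$-vertex walks ending at $v$, of $(g_{v_1}\wedge\cdots\wedge g_{v_{p-1}})\otimes b_0^\top M_{1,v_1}\cdots M_{p-1,v_{p-1}}$. Applying $\wedge g_v$ and then $M_{p,v}$ termwise, and using the pure-tensor identity above, appends $g_{v_p}$ to the wedge and $M_{p,v_p}$ to the matrix product, with $v_p=v$. This is the claimed formula for $p$.

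For the consequence, take $p=k$. Walks with a repeated vertex have zero wedge product by the alternating property, so only paths survive. For a path, Lemma~\ref{lem:wedge-coordinate-prelim} gives $g_{v_1}\wedge\cdots\wedge g_{v_k}=\Delta_P e_{[k]}$. Reading off the top coordinate and summing over $v$ gives
\[
 z=\sum_{P\in\calP_k}\Delta_P\, b_0^\top M_{1,v_1}\cdots M_{k,v_k}.
\]
Multiplying by $\sqrt D\,b_0$ on the right yields $\sum_P\Delta_P\chi(P)$ by~\eqref{eq:path-signature-human}, which is $Z$ in~\eqref{eq:unweighted-estimator-human}.

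The main point needing care is the compatibility of the two actions on the tensor product, together with the walk bijection. Neither is a real obstacle. The rest is bookkeeping, in particular checking that the matrix index $p$ matches the position of $v_p$ at every layer.
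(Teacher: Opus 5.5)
Your proposal is correct and follows essentially the same route as the paper: induction on $p$ starting from the initialization, unique extension of each walk through its last arc, distributivity to append $g_v$ and $M_{p,v}$, and then, at $p=k$, vanishing of repeated-vertex terms, the determinant coordinate, summation over endpoints, and the normalized projection. Your explicit pure-tensor identity just spells out the paper's remark that the wedge and the matrix act on different tensor factors.
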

\begin{proof}
The statement for $p=1$ is the initialization. Every walk ending at $v$
with $p$ vertices is obtained uniquely by extending a walk ending at an
in-neighbour $u$. The first operation in~\eqref{eq:unweighted-path-recurrence}
adds these possibilities. The next two append $g_v$ and $M_{p,v}$ to
their respective products. Distributivity gives exactly the asserted sum.
Repeated host vertices give repeated exterior vectors, so their terms
vanish. At $p=k$, the surviving exterior coefficient is $\Delta_P$.
Summing endpoints and taking the normalized projection proves the claim.
\end{proof}

\subsection{The coordinate update, including its sign}
If $x$ has degree $p-1$, right multiplication by $g$ gives
\begin{equation}\label{eq:coordinate-wedge}
 (x\wedge g)[T]=\sum_{j\in T}(-1)^{p-\operatorname{pos}_T(j)}
                 g(j)x[T\setminus\{j\}],\qquad |T|=p.
\end{equation}
Here $\operatorname{pos}_T(j)$ is the position of $j$ in the increasing
order of $T$. The new $e_j$ starts at the right end, so it crosses the
$p-\operatorname{pos}_T(j)$ coordinates larger than $j$. This is the
source of the sign. For example, at degree two with $T=\{1,3\}$ the
row is $g(3)x[\{1\}]-g(1)x[\{3\}]$.

This map has $p\binom{k}{p}$ incidences, one for each pair $(T,j\in T)$.
It does not multiply every input coordinate by every output coordinate.
Applying it to rows of length $D$ costs
$\OO(Dp\binom{k}{p})$ operations.

\paragraph{One trial in coordinates.}
The following is the full dynamic program, with $D=k$.
\begin{enumerate}[leftmargin=1.7em,itemsep=3pt]
\item Sample the $g_v$ and all $M_{p,v}$ independently as specified above.
      Fix $b_0=e_1$.
\item For each vertex $v$ and $j\in[k]$, initialize
      $X_{1,v}[\{j\}]=g_v(j)b_0^\top M_{1,v}$.
\item For each $p=2,\ldots,k$, first set $U_{p,v}[S]=0$ for every
      endpoint $v$ and $(p-1)$-set $S$. Scan the arcs $(u,v)$ and add
      $X_{p-1,u}[S]$ to $U_{p,v}[S]$ for every such $S$.
\item At this layer, for each $v$ and $p$-set $T$, compute
      \[
       H_{p,v}[T]=\sum_{j\in T}(-1)^{p-\operatorname{pos}_T(j)}
                       g_v(j)U_{p,v}[T\setminus\{j\}],
       \qquad X_{p,v}[T]=H_{p,v}[T]M_{p,v}.
      \]
      Discard the preceding layer when its contributions have all been processed.
\item After layer $k$, form $z=\sum_vX_{k,v}[[k]]$ and return
      $D(zb_0)^2/k!$.
\end{enumerate}
Steps 3 and 4 are repeated together for each layer. Each sampled vector
or matrix is reused wherever its index recurs. In particular, we do not
resample a matrix for a different exterior coordinate. The polynomial
expansion in the lemma relies on that reuse.

The output may be zero on a graph with paths in a particular trial.
It is its expectation and controlled fluctuation across independent
trials that make it an approximate counter. On a graph with no such
path, every trial is exactly zero.
\section{Why the estimator concentrates}\label{sec:moments}
Unbiasedness tells us what a trial estimates. To use only polynomially many
trials, we also need a polynomial bound on its relative second moment.
Since the trial squares an amplitude, this is a fourth-moment question
about that amplitude. We first condition on the exterior vectors and
bound the matrix contribution. We then average over the exterior vectors.
This order lets us handle paths that share vertices and matrix choices.

\subsection{Gaussian variables, matrices, and the moment method}

We use Gaussian randomness in two places.  The vertex vectors $g_v\in\R^k$ create random exterior volumes.  The auxiliary matrices $M_{p,v}\in\R^{D\times D}$ create path signatures with small fourth moment.  This subsection records the definitions and moment identities used later.

\begin{definition}[Gaussian vectors and Gaussian matrices]
A random vector $h\in\R^d$ is a standard Gaussian vector, written $h\sim N(0,I_d)$, if its coordinates are independent $N(0,1)$ random variables.  More generally, $z\sim N(0,K)$ means that $z$ is a centered Gaussian vector with covariance matrix $K\succeq0$, i.e.
\[
  \E[z]=0,
  \qquad
  \E[zz^\top]=K.
\]
An $a\times b$ matrix $A$ is a standard Gaussian matrix if all its entries are independent $N(0,1)$ random variables.  A $D\times D$ matrix $G$ is a normalized Gaussian matrix if its entries are independent $N(0,1/D)$ random variables.
\end{definition}

The normalization $1/D$ is chosen so that multiplication by $G$ preserves Euclidean norm in expectation.  Indeed, for every fixed $y\in\R^D$,
\[
  Gy\sim N\left(0,\frac{\norm{y}^2}{D}I_D\right),
  \qquad
  \E\norm{Gy}^2=\norm{y}^2.
\]
Thus a product of independent normalized Gaussian matrices behaves like a random sign: it preserves the second moment.  The main extra point is that it also keeps the fourth moment under control.

\par\medskip\noindent\textbf{The Gaussian conditioning method.}\ 
The proofs repeatedly use the following template.  Condition on all randomness from earlier layers.  The next layer consists of fresh independent Gaussian matrices.  Therefore the new vector, or the rows of the new matrix, are centered Gaussian with an explicitly computable covariance.  Once the covariance is known, second and fourth moments follow from elementary Gaussian formulas.
\par\medskip

\begin{lemma}[Gaussian quadratic moments]\label{lem:gaussian-quadratic-prelim}
Let $z\sim N(0,K)$ in $\R^d$, where $K\succeq0$.  Then
\[
  \E\norm{z}^2=\tr K,
  \qquad
  \E\norm{z}^4=(\tr K)^2+2\tr(K^2).
\]
In particular, since $K\succeq0$,
\[
  \E\norm{z}^4\le 3(\tr K)^2.
\]
\end{lemma}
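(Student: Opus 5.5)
The plan is to diagonalize the covariance. Since $K\succeq0$ is symmetric, write $K=Q\Lambda Q^\top$ with $Q$ orthogonal and $\Lambda=\operatorname{diag}(\lambda_1,\ldots,\lambda_d)$, $\lambda_i\ge0$. Let $h\sim N(0,I_d)$ be a standard Gaussian vector. Then $Q\Lambda^{1/2}h$ is centered Gaussian with covariance $Q\Lambda Q^\top=K$. Because a centered Gaussian vector is determined in law by its covariance, we may take $z=Q\Lambda^{1/2}h$. Orthogonal invariance of the Euclidean norm then gives
\[
  \norm{z}^2=\norm{\Lambda^{1/2}h}^2=\sum_{i=1}^d\lambda_i h_i^2 ,
\]
where the $h_i$ are independent $N(0,1)$ variables.

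For the second moment, use $\E h_i^2=1$. This gives $\E\norm{z}^2=\sum_i\lambda_i=\tr K$.

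For the fourth moment, expand the square:
\[
  \norm{z}^4=\sum_{i,j}\lambda_i\lambda_j h_i^2h_j^2 .
\]
For $i\ne j$, independence gives $\E h_i^2h_j^2=1$. For $i=j$, the standard Gaussian moment gives $\E h_i^4=3$. Therefore
\[
  \E\norm{z}^4=\sum_{i,j}\lambda_i\lambda_j+2\sum_i\lambda_i^2=(\tr K)^2+2\tr(K^2),
\]
since $\tr(K^2)=\sum_i\lambda_i^2$.

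For the final inequality, nonnegativity of the eigenvalues gives $\sum_i\lambda_i^2\le(\sum_i\lambda_i)^2$, so $\tr(K^2)\le(\tr K)^2$. Hence $\E\norm{z}^4\le3(\tr K)^2$.

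The only point that needs care is the representation step. If $K$ is singular, $z$ has no density, so the argument should rely only on the fact that a centered Gaussian vector is determined by its covariance. Taking the Gaussian vector as defined through a linear image of a standard Gaussian vector avoids any issue. The remaining steps are routine moment computations.
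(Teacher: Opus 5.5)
Your proposal is correct and follows essentially the paper's proof: you write $z$ as a linear image of a standard Gaussian so that $\norm{z}^2=\sum_i\lambda_i h_i^2$ with independent $h_i\sim N(0,1)$, then use $\E h_i^2=1$, $\E h_i^4=3$, and $\tr(K^2)\le(\tr K)^2$ for nonnegative eigenvalues. Your choice $z=Q\Lambda^{1/2}h$ reaches the diagonal form directly through orthogonal invariance of the norm, a slightly cleaner step than the paper's appeal to rotational invariance of $h$ after writing $z=K^{1/2}h$.
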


\begin{proof}
Write $z=K^{1/2}h$ with $h\sim N(0,I_d)$.  Then
\[
  \norm{z}^2=h^\top K h=\sum_i \lambda_i h_i^2,
\]
where $\lambda_i$ are the eigenvalues of $K$ and, by rotational invariance, the coordinates $h_i$ are taken in an orthonormal eigenbasis.  Hence
\[
  \E\norm{z}^2=\sum_i\lambda_i=\tr K.
\]
For the fourth moment,
\[
  \E\left(\sum_i\lambda_i h_i^2\right)^2
  =
  \sum_i\lambda_i^2\E h_i^4
  +2\sum_{i<j}\lambda_i\lambda_j\E h_i^2\E h_j^2.
\]
Using $\E h_i^2=1$ and $\E h_i^4=3$, this equals
\[
  3\sum_i\lambda_i^2+2\sum_{i<j}\lambda_i\lambda_j
  =
  \left(\sum_i\lambda_i\right)^2+2\sum_i\lambda_i^2
  =
  (\tr K)^2+2\tr(K^2).
\]
Finally, $\tr(K^2)\le(\tr K)^2$ for $K\succeq0$.
\end{proof}

\subsection{Matrix-product signatures}\label{sec:matrix-product-signs}

We now prove the fourth-moment bound for matrix-product signatures. The Gaussian
bound is credited to Rakhshan and Rabusseau~\cite{RR20}; we retain the complete
proof because the joint treatment of correlated prefixes is central to the
application to paths.

\begin{lemma}[Gaussian matrix-product signatures; Rakhshan--Rabusseau~\cite{RR20}]\label{lem:matrix-signs}
Let $r\ge 1$.  For each layer $i\in[r]$ and symbol $a\in\Sigma_i$, let $G_{i,a}\in\R^{D\times D}$ be independent Gaussian matrices with entries $N(0,1/D)$.  Fix unit vectors $\alpha,\beta\in\R^D$.  For a word $w=(w_1,\ldots,w_r)$, define
\[
  \chi(w)=\sqrt D\,\alpha^\top G_{r,w_r}G_{r-1,w_{r-1}}\cdots G_{1,w_1}\beta.
\]
Then, for every real coefficient vector $(a_w)$,
\[
  \E\left[\left(\sum_w a_w\chi(w)\right)^2\right]
  =
  \sum_w a_w^2,
\]
and
\[
  \E\left[\left(\sum_w a_w\chi(w)\right)^4\right]
  \le
  3(1+2/D)^{r-1}\left(\sum_w a_w^2\right)^2.
\]
\end{lemma}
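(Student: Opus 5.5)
We prove both statements by induction on the number of layers $r$, using the Gaussian conditioning method. The trick is to keep a vector-valued state rather than a scalar. For $i\in\{0,\ldots,r\}$, define the random vector
\[
  v_i=\sum_w a_w\,G_{i,w_i}\cdots G_{1,w_1}\beta\in\R^D .
\]
For $i<r$ this sum is organized by grouping words according to their prefix. Write $w=(u,s)$, where $u$ is a prefix of length $i$ and $s$ is the remaining suffix. Then set $v_i^{(s)}=\sum_u a_{(u,s)}G_{i,u_i}\cdots G_{1,u_1}\beta$, so that there is one vector for each suffix $s$. The quantity to control is the family $(v_i^{(s)})_s$. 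Since
\[
  \sum_w a_w\chi(w)=\sqrt D\,\alpha^\top\sum_{s_r} G_{r,s_r} v_{r-1}^{(s_r)},
\]
it suffices to track two potentials: $\Phi_i=\E\sum_s\norm{v_i^{(s)}}^2$ and $\Psi_i=\E\bigl(\sum_s\norm{v_i^{(s)}}^2\bigr)^2$. At $i=0$ we have $v_0^{(w)}=a_w\beta$, so $\Phi_0=\sum_w a_w^2$ and $\Psi_0=(\sum_w a_w^2)^2$.

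\textbf{Transition step.} Condition on layers $1,\ldots,i$. Then
\[
  v_{i+1}^{(s')}=\sum_{b\in\Sigma_{i+1}}G_{i+1,b}\,v_i^{(b,s')}.
\]
Fix $s'$. The matrices $G_{i+1,b}$ for distinct $b$ are independent, and each has independent $N(0,1/D)$ entries. Hence $v_{i+1}^{(s')}$ is conditionally centered Gaussian with covariance $\frac1D\sum_b\norm{v_i^{(b,s')}}^2 I_D$. Different $s'$ reuse the same matrices, so the vectors are correlated across $s'$. The clean way to handle this is to stack everything. Put $Q=\sum_{s'}\norm{v_{i+1}^{(s')}}^2$, and note that this is $\sum_b$ of a quadratic form in the rows of $G_{i+1,b}$. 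Each row of $G_{i+1,b}$ is an independent $N(0,I_D/D)$ vector $h$, and it contributes $\sum_{s'}(h^\top v_i^{(b,s')})^2=h^\top K_b h$, where $K_b=\sum_{s'}v_i^{(b,s')}v_i^{(b,s')\top}$. So $Q$ is a sum of $D\cdot|\Sigma_{i+1}|$ independent terms, one per row and per $b$.

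Two computations follow. First, $\E Q=\sum_b\tr K_b=\sum_s\norm{v_i^{(s)}}^2$, which gives $\Phi_{i+1}=\Phi_i$. Second, applying Lemma~\ref{lem:gaussian-quadratic-prelim} to each row term (with covariance $K_b^{1/2}K_b^{1/2}/D$) and using independence across rows and $b$,
\[
  \E Q^2=(\E Q)^2+\sum_b D\cdot\frac{2\tr(K_b^2)}{D^2}\le\Bigl(1+\frac2D\Bigr)(\E Q)^2 ,
\]
because $\sum_b\tr(K_b^2)\le(\sum_b\tr K_b)^2$. Taking expectations over earlier layers gives $\Psi_{i+1}\le(1+2/D)\Psi_i$. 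Iterating up to $i=r-1$ yields $\Psi_{r-1}\le(1+2/D)^{r-1}(\sum_w a_w^2)^2$.

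\textbf{Final layer.} Condition on layers $1,\ldots,r-1$. The scalar $S=\sqrt D\,\alpha^\top\sum_b G_{r,b}v_{r-1}^{(b)}$ is then centered Gaussian with variance $\sum_b\norm{v_{r-1}^{(b)}}^2$, using $\norm\alpha=1$. So $\E S^2=\Phi_{r-1}=\sum_w a_w^2$, and $\E S^4=3\,\E\bigl(\sum_b\norm{v_{r-1}^{(b)}}^2\bigr)^2=3\Psi_{r-1}$. This gives the stated bound. The case $r=1$ is immediate, with exponent $0$.

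\textbf{Main obstacle.} The hard part is the correlation across suffixes caused by the shared matrices. Bounding each $v_{i+1}^{(s')}$ separately and summing would lose factors. The row-by-row decomposition into independent quadratic forms $h^\top K_b h$ is the fix: it turns the shared randomness into a sum of independent terms. After that, the loss per layer is only the $2\tr(K_b^2)/D$ term. One must also make sure the per-row variance $1/D$ together with $D$ rows produces exactly the factor $2/D$; this bookkeeping is what I would double-check first.
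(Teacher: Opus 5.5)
Your overall structure matches the paper's proof. The suffix-indexed family $(v_i^{(s)})_s$ is the index set $H$ in the paper's vector-valued strengthening, you condition one layer at a time, and your final-layer step is identical. The problem is in the transition step. Let $h_{b,\ell}$ denote the $\ell$-th row of $G_{i+1,b}$. The $\ell$-th coordinate of $v_{i+1}^{(s')}=\sum_b G_{i+1,b}v_i^{(b,s')}$ is $\sum_b h_{b,\ell}^\top v_i^{(b,s')}$, so
\[
 Q=\sum_{\ell=1}^D\sum_{s'}\Bigl(\sum_b h_{b,\ell}^\top v_i^{(b,s')}\Bigr)^2 .
\]
This contains cross terms $(h_{b,\ell}^\top v_i^{(b,s')})(h_{b',\ell}^\top v_i^{(b',s')})$ with $b\ne b'$. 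Your decomposition $Q=\sum_b\sum_\ell h_{b,\ell}^\top K_b h_{b,\ell}$ into $D|\Sigma_{i+1}|$ independent terms drops them. They have mean zero, so $\Phi_{i+1}=\Phi_i$ still holds. They do contribute to the variance, however, so your identity $\E Q^2=(\E Q)^2+\frac2D\sum_b\tr(K_b^2)$ is false. For example, take $D=1$, one suffix, two symbols, and $v_i^{(1,s')}=v_i^{(2,s')}=1$. Then $Q=(G_1+G_2)^2$ and $\E Q^2=12$, while your formula gives $8$. Since $12=(1+2/D)(\E Q)^2$, the omitted terms are exactly what saturates the bound, so this is not a harmless slip in bookkeeping.

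The fix is to use independence only across the row index $\ell$, with all symbols grouped together. Let $Y_b$ be the $D\times m$ matrix whose columns are $v_i^{(b,s')}$, one column per suffix $s'$. Let $X=\sum_b G_{i+1,b}Y_b$ be the matrix with columns $v_{i+1}^{(s')}$. Conditioned on earlier layers, the $D$ rows of $X$ are i.i.d.\ $N(0,K)$ with $K=\frac1D\sum_b Y_b^\top Y_b$. Hence
\[
 \E[Q^2\mid\cdot]=D^2(\tr K)^2+2D\tr(K^2).
\]
Now apply $\tr(K^2)\le(\tr K)^2$ to this summed positive semidefinite matrix, not to each $K_b$ separately. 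This gives $\E[Q^2\mid\cdot]\le(1+2/D)\bigl(\sum_b\norm{Y_b}_F^2\bigr)^2$, which is exactly the paper's computation and yields $\Psi_{i+1}\le(1+2/D)\Psi_i$. Equivalently, the correct $D|\Sigma_{i+1}|$-dimensional quadratic form for row $\ell$ has off-diagonal blocks $Y_bY_{b'}^\top$, and you kept only the diagonal blocks. With this correction, the rest of your argument goes through.
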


The lemma is written for matrices acting on a column from right to left.
The path algorithm propagates a row from left to right; reversing the
layer indices gives its convention. The coefficients below are arbitrary
real numbers. In the application, we first fix the exterior vectors and
use the resulting determinants as these coefficients.

\begin{proof}
A single sum does not give a closed induction: separating its last symbol
produces a family of sums that share earlier matrices. We therefore prove
a bound for the combined squared norms of an arbitrary family. This lets
us carry all those sums through the induction without assuming that they
are independent.

We first prove a vector-valued strengthening.  Let $t\ge 0$ and let $H$ be a finite index set.  For every $h\in H$ and every word $w\in\Sigma_1\times\cdots\times\Sigma_t$, let $a_{h,w}\in\R$.  Define
\[
  X_h
  =
  \sum_{w\in\Sigma_1\times\cdots\times\Sigma_t}
  a_{h,w}G_{t,w_t}\cdots G_{1,w_1}\beta
  \in\R^D,
\]
with the convention that for $t=0$, $X_h=a_h\beta$.
We claim that
\begin{equation}\label{eq:vector-fourth}
  \E\left[\left(\sum_{h\in H}\|X_h\|^2\right)^2\right]
  \le
  (1+2/D)^t
  \left(\sum_{h,w}a_{h,w}^2\right)^2.
\end{equation}
For $t=0$, this holds with equality, because $\|\beta\|=1$.

Assume it holds for $t-1$.  Write a word as $(w',a)$, where $a\in\Sigma_t$ is the last symbol.  Define
\[
  Y_{h,a}
  =
  \sum_{w'} a_{h,w',a}G_{t-1,w'_{t-1}}\cdots G_{1,w'_1}\beta.
\]
Then
\[
  X_h=\sum_{a\in\Sigma_t}G_{t,a}Y_{h,a}.
\]
Condition on all vectors $Y_{h,a}$.  Let $X$ be the $D\times |H|$ matrix whose $h$-th column is $X_h$, and let $Y_a$ be the $D\times |H|$ matrix whose $h$-th column is $Y_{h,a}$.  Then
\[
  X=\sum_a G_{t,a}Y_a.
\]
The rows of $X$, conditioned on the $Y_a$'s, are independent centered Gaussian vectors in $\R^{|H|}$ with covariance
\[
  K=\frac1D\sum_a Y_a^\top Y_a.
\]
Let
\[
  S=\|X\|_F^2=\sum_h\|X_h\|^2.
\]
If $z\sim N(0,K)$, then
\[
  \E\|z\|^2=\tr K,
  \qquad
  \E\|z\|^4=(\tr K)^2+2\tr(K^2).
\]
Since $X$ has $D$ independent rows,
\[
  \E[S^2\mid Y]
  =D^2(\tr K)^2+2D\tr(K^2).
\]
Now
\[
  \tr K=\frac1D\sum_a\|Y_a\|_F^2,
\]
and, since $K\succeq 0$,
\[
  \tr(K^2)\le (\tr K)^2.
\]
Therefore
\[
  \E[S^2\mid Y]
  \le
  \left(1+\frac2D\right)
  \left(\sum_a\|Y_a\|_F^2\right)^2.
\]
Taking expectation and applying the induction hypothesis to the family indexed by pairs $(h,a)$ proves \cref{eq:vector-fourth}.

\paragraph{Why conditioning does not assume independent prefixes.}
The matrices at layer $t$ are independent of the sigma-field generated by
all matrices at layers $1,\ldots,t-1$. Every $Y_{h,a}$ is measurable with
respect to this earlier sigma-field. These vectors can be highly correlated:
different words may reuse exactly the same earlier matrices. The conditional
Gaussian calculation requires no independence between the $Y_{h,a}$'s.
After conditioning, they are a deterministic family. To apply the induction
hypothesis, take the new column index set to be
$H'=H\times\Sigma_t$ and the coefficients to be $a_{(h,a),w'}=a_{h,w',a}$.
Their squared coefficient mass is exactly $\sum_{h,w}a_{h,w}^2$.
The proof's large index set is an analytical device; the algorithm never
constructs a table indexed by all words.

We now derive the scalar fourth moment.  Take $t=r-1$.  For each final symbol $a\in\Sigma_r$, define
\[
  Y_a
  =
  \sum_{w'\in\Sigma_1\times\cdots\times\Sigma_{r-1}}
  a_{w',a}G_{r-1,w'_{r-1}}\cdots G_{1,w'_1}\beta.
\]
Then
\[
  \sum_w a_w\chi(w)
  =
  \sqrt D\,\alpha^\top\sum_a G_{r,a}Y_a.
\]
Conditioned on the $Y_a$'s, the vector $V=\sum_aG_{r,a}Y_a$ is centered Gaussian with covariance
\[
  \frac1D\left(\sum_a\|Y_a\|^2\right)I_D.
\]
Thus
\[
  \sqrt D\,\alpha^\top V
  \sim
  N\left(0,\sum_a\|Y_a\|^2\right),
\]
and hence
\[
  \E\left[\left(\sum_w a_w\chi(w)\right)^4\mid Y\right]
  =
  3\left(\sum_a\|Y_a\|^2\right)^2.
\]
Taking expectation and applying \cref{eq:vector-fourth} with $H=\Sigma_r$ gives the fourth-moment bound.

The second-moment identity follows similarly.  Conditioned on $Y_a$,
\[
  \E\left[\left(\sqrt D\,\alpha^\top V\right)^2\mid Y\right]
  =
  \sum_a\|Y_a\|^2.
\]
Iterating this identity backwards over the layers yields
\[
  \E\left[\left(\sum_w a_w\chi(w)\right)^2\right]
  =
  \sum_w a_w^2.
\]
This proves the lemma.
\end{proof}

\begin{remark}[Identification with the tensor-train projection]\label{rem:rr-credit}
Write $G_{i,a}=D^{-1/2}H_{i,a}$ with independent standard Gaussian entries.
For $r\ge2$, the boundary slices $H_{1,a}\beta$ and
$\alpha^\top H_{r,a}$ are standard Gaussian vectors. The intervening slices
are Gaussian matrices, and the overall normalization is $D^{-(r-1)/2}$.
Thus the contraction in \cref{lem:matrix-signs} is the single-output
tensor-train projection in~\cite{RR20}, with tensor order $r$ and rank $D$.
Its fourth-moment bound is the one stated above. The case $r=1$ is a single
Gaussian linear form. We do not claim this Gaussian moment lemma as new.
\end{remark}

For paths, we use words of length $k$ with role--image
pairs as symbols. The lemma allows arbitrary real coefficients and
therefore applies after conditioning on the exterior vectors, including
when different embeddings have shared vertices or prefixes.

\subsection{Exterior volumes}
\begin{lemma}[Determinant moments]\label{lem:det-moments}
Let $g_v\in\R^k$ be independent standard Gaussian vectors. For a $k$-set
$U$ let $\Delta_U$ be their determinant in any fixed column order, and put
$K_k=(k+1)(k+2)/2$. Then
\begin{equation}\label{eq:det-moments}
 \E\Delta_U^2=k!,\qquad \E\Delta_U^4=K_k(k!)^2.
\end{equation}
For arbitrary nonnegative masses $m_U$,
\begin{equation}\label{eq:weighted-volume-new}
 \E\left(\sum_U m_U\Delta_U^2\right)^2
 \le K_k(k!)^2\left(\sum_U m_U\right)^2.
\end{equation}
\end{lemma}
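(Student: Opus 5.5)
We will prove the two identities in \eqref{eq:det-moments} through the classical Bartlett (Gram--Schmidt) decomposition of a Gaussian determinant. The weighted bound \eqref{eq:weighted-volume-new} will then follow from Cauchy--Schwarz.

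For a fixed $k$-set $U$, the columns $g_v$, $v\in U$, are $k$ independent standard Gaussian vectors in $\R^k$. Reordering the columns changes only the sign of $\Delta_U$, so even moments are unaffected. We write
\[
 |\Delta_U|=\prod_{j=1}^k \norm{\pi_j g_{v_j}},
\]
where $\pi_j$ is the orthogonal projection onto the orthogonal complement of $\operatorname{span}(g_{v_1},\ldots,g_{v_{j-1}})$. We condition successively on the first $j-1$ columns. This complement has dimension $k-j+1$ almost surely, and by rotational invariance $\pi_j g_{v_j}$ is a standard Gaussian vector in it. Hence $\norm{\pi_j g_{v_j}}^2\sim\chi^2_{k-j+1}$, and these variables are independent across $j$. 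The standard chi-square moments $\E\chi^2_d=d$ and $\E(\chi^2_d)^2=d(d+2)$ then give
\[
 \E\Delta_U^2=\prod_{d=1}^k d=k!,\qquad
 \E\Delta_U^4=\prod_{d=1}^k d(d+2)=k!\cdot\frac{(k+2)!}{2}=K_k(k!)^2,
\]
since $(k+2)!/(2\,k!)=(k+1)(k+2)/2$. The chi-square fourth moment can itself be read off from \cref{lem:gaussian-quadratic-prelim} with $K=I_d$, which gives $d^2+2d$.

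For \eqref{eq:weighted-volume-new}, we expand the square as
\[
 \E\Bigl(\sum_U m_U\Delta_U^2\Bigr)^2=\sum_{U,U'}m_Um_{U'}\E[\Delta_U^2\Delta_{U'}^2].
\]
Each cross expectation is at most $\sqrt{\E\Delta_U^4\,\E\Delta_{U'}^4}=K_k(k!)^2$ by Cauchy--Schwarz, and this holds even though $U$ and $U'$ may share vectors. Nonnegativity of the masses then lets us sum these bounds to obtain $K_k(k!)^2(\sum_U m_U)^2$.

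The main obstacle is the rigorous justification of the independence and chi-square claims in the Bartlett decomposition. We will handle it by conditioning column by column and invoking rotational invariance of $N(0,I_k)$: for fixed earlier columns, the projection of a fresh Gaussian vector onto a fixed $(k-j+1)$-dimensional subspace is standard Gaussian there. The conditional law therefore does not depend on the earlier columns, which gives independence. The degenerate event that the earlier columns are linearly dependent has probability zero. Everything else is routine arithmetic.
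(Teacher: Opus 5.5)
Your proposal is correct and follows the paper's proof essentially step for step. Both use Gram--Schmidt to write $\Delta_U^2$ as a product of independent $\chi^2_d$ variables for $d=1,\ldots,k$, multiply the moments $d$ and $d(d+2)$ to get $k!$ and $k!(k+2)!/2=K_k(k!)^2$, and then bound each cross term $\E[\Delta_U^2\Delta_{U'}^2]$ by Cauchy--Schwarz, summing with the nonnegative masses.
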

\begin{proof}
Apply Gram--Schmidt to the $k$ independent columns. Conditioned on previous
columns, the squared length of the next perpendicular component has the
$\chi^2_d$ distribution for $d=k,k-1,\ldots,1$. Its conditional law depends
only on $d$, so these squared lengths are independent. For a sum $Q$ of
$d$ independent squared standard Gaussians, $\E Q=d$ and
$\E Q^2=d(d+2)$. Their product is the squared determinant. Thus its first
two moments are $\prod_{d=1}^k d=k!$ and
$\prod_{d=1}^k d(d+2)=k!(k+2)!/2$, proving~\eqref{eq:det-moments}.
For any $U,U'$, Cauchy--Schwarz gives
$\E[\Delta_U^2\Delta_{U'}^2]\le K_k(k!)^2$.
Expand the square and use $m_Um_{U'}\ge0$.
\end{proof}

\subsection{Combining the two moments}
We can now justify the repetition bound promised during the construction.
Both kinds of dependence in the overlapping-path example are covered:
the matrix lemma treats shared matrix choices, and the determinant
bound treats shared vertex vectors.

\begin{theorem}[The unweighted path estimator]
\label{thm:human-path-estimator}
Set $D=k$, $K_k=(k+1)(k+2)/2$, and $C_k=3e^2K_k$. One trial of the
path algorithm satisfies
\[
 \E Y=N_k,\qquad \E Y^2\le C_kN_k^2.
\]
The median of $\OO(\log(2/\delta))$ independent group averages, each
using $\lceil8C_k\varepsilon^{-2}\rceil$ trials, is a relative
$\varepsilon$-approximation with probability at least $1-\delta$.
If $N_k=0$, every trial is zero.
\end{theorem}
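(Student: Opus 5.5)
We will condition on the exterior vectors, apply \cref{lem:matrix-signs} with the determinants as coefficients, and then average over the vectors using \cref{lem:det-moments}; a standard median-of-means argument finishes the proof.

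\emph{Unbiasedness and the zero case.} By \cref{lem:human-path-expansion}, the trial outputs $Y=Z^2/k!$ with
\[
  Z=\sum_{P\in\calP_k}\Delta_P\chi(P).
\]
Fix the vectors $(g_v)_v$. Then $Z$ is a sum $\sum_w a_w\chi(w)$ over words of length $r=k$. Layer $p$ has symbol set $V$ and matrix $M_{p,v}$, with the order of the layers reversed relative to the lemma and $\alpha=\beta=b_0$. We set $a_w=\Delta_P$ when $w$ is the vertex sequence of a path $P\in\calP_k$ and $a_w=0$ otherwise. Transposing the row product to a column product changes nothing, because transposes of independent $N(0,1/D)$ matrices have the same law. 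The second-moment identity of \cref{lem:matrix-signs} gives
\[
  \E_M[Z^2\mid g]=\sum_P\Delta_P^2,
\]
and \cref{lem:det-moments} gives $\E\Delta_P^2=k!$. Hence $\E Y=N_k$. If $N_k=0$, then every walk has a repeated vertex, each term vanishes identically, and so $Z=0$ in every trial.

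\emph{Second moment.} The fourth-moment bound of \cref{lem:matrix-signs} with $r=k$ gives
\[
  \E_M[Z^4\mid g]\le 3(1+2/D)^{k-1}\Bigl(\sum_P\Delta_P^2\Bigr)^2.
\]
With $D=k$ we have $(1+2/k)^{k-1}\le e^2$. Next we take the expectation over $g$ and use \eqref{eq:weighted-volume-new}. Each path $P$ has vertex set $U_P$ with $\Delta_P=\pm\Delta_{U_P}$ in a suitable column order. It is cleaner, however, to use the Cauchy--Schwarz step inside that lemma directly, namely $\E[\Delta_P^2\Delta_Q^2]\le K_k(k!)^2$ for any two paths, with all masses equal to one. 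Either route gives
\[
  \E\Bigl(\sum_P\Delta_P^2\Bigr)^2\le K_k(k!)^2N_k^2.
\]
Dividing by $(k!)^2$ then yields $\E Y^2\le 3e^2K_kN_k^2=C_kN_k^2$.

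\emph{Amplification.} Suppose $N_k>0$ and let $A$ be the average of $t=\lceil 8C_k\varepsilon^{-2}\rceil$ independent trials. Then $\Var A\le C_kN_k^2/t\le\varepsilon^2N_k^2/8$. Chebyshev's inequality gives
\[
  \Pr\bigl[|A-N_k|>\varepsilon N_k\bigr]\le 1/8.
\]
The median of $s$ independent such averages fails only if at least $s/2$ of them fail. By a Chernoff--Hoeffding bound this probability is at most $e^{-2s(1/2-1/8)^2}$, which is at most $\delta$ for $s=\OO(\log(2/\delta))$. When $N_k=0$ every output is zero, which is exact.

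The main care point is the conditioning step. We must check that \cref{lem:matrix-signs} applies with coefficients that are measurable in $g$ but arbitrary given $g$. It does, because $M$ is independent of $g$ and the lemma allows arbitrary real coefficients, including correlated, shared-prefix words. The remaining steps are routine.
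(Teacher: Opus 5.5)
Your proof is correct and follows the paper's argument. You condition on the vertex vectors, apply \cref{lem:matrix-signs} with coefficients $\Delta_P$, average over the vectors via \cref{lem:det-moments}, and finish with Chebyshev and a median of group averages. The only differences are minor: you bound $\E[\Delta_P^2\Delta_Q^2]$ pairwise by Cauchy--Schwarz instead of grouping paths by image set (the paper does it your way in the tree case), and you use a Hoeffding bound for the median where the paper uses a union bound over subsets of failing groups.
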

\begin{proof}
Condition on all exterior vectors and apply Lemma~\ref{lem:matrix-signs}
to the coefficient family $a_P=\Delta_P$, assigning zero to other
role--image words. Its reversed layer convention is converted to the
forward products of~\eqref{eq:path-signature-human} by reversing the
layer indices. The lemma gives
\[
 \E_M Z^2=\sum_P\Delta_P^2,\qquad
 \E_M Z^4\le3(1+2/D)^{k-1}\left(\sum_P\Delta_P^2\right)^2.
\]
For each image set $U$, let $m_U$ count the ordered paths using exactly
that set. Their determinants differ only by column permutations, so
$\sum_P\Delta_P^2=\sum_Um_U\Delta_U^2$ and $\sum_Um_U=N_k$.
Lemma~\ref{lem:det-moments} now gives both claims after dividing by
$k!$ and $(k!)^2$, respectively, and using $(1+2/k)^{k-1}\le e^2$.

For $N_k>0$, a group average of $R$ independent trials has variance at
most $C_kN_k^2/R$. With $R=\lceil8C_k\varepsilon^{-2}\rceil$,
Chebyshev's inequality bounds its failure probability by $1/8$.
For an odd number $q$ of independent groups, the median can fail only
if at least $q/2$ groups fail. A union bound over these subsets gives
probability at most $2^q(1/8)^{q/2}=2^{-q/2}$. Choosing
$q=\OO(\log(2/\delta))$ proves the confidence guarantee.
If $N_k=0$, the expansion has no surviving term.
\end{proof}

\subsection{Running time and storage}
\begin{proposition}[Cost of the path dynamic program]
\label{prop:implementation-cost}
For auxiliary dimension $D\ge1$, one trial costs
\[
 \OO\bigl(2^k(mD+nkD+nD^2)+knD^2\bigr)
\]
arithmetic operations. Its working storage is
$\OO(nD\max_p\binom{k}{p}+nk+knD^2)$ words when all random matrices
are retained. With $D=k$, the resulting approximation algorithm takes
$2^k k^{\OO(1)}(n+m)\varepsilon^{-2}\log(2/\delta)$ operations.
\end{proposition}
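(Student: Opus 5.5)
The proof is a direct count of the operations in the five steps of the coordinate dynamic program of Section~\ref{sec:coordinate-implementation}, organized layer by layer. The final bound then follows by multiplying by the number of trials from Theorem~\ref{thm:human-path-estimator}.

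\textbf{Sampling and initialization.} Sampling the vectors $g_v$ costs $nk$ operations, and sampling all matrices $M_{p,v}$ costs $knD^2$; this is the additive $knD^2$ term. Initializing $X_{1,v}$ needs one row $b_0^\top M_{1,v}$ per vertex. This row is just the first row of $M_{1,v}$, so it costs $\OO(D)$. Scaling it by each $g_v(j)$ costs $\OO(nkD)$.

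\textbf{Cost of layer $p$.} Fix $p\ge2$.
\begin{enumerate}
\item The arc scan in step~3 adds, for every arc and every $(p-1)$-set $S$, a row of length $D$. This costs $\OO(mD\binom{k}{p-1})$.
\item The wedge update in step~4 uses the incidence count from \eqref{eq:coordinate-wedge}. It costs $\OO(nDp\binom{k}{p})$, and $p\le k$.
\item Each of the $n\binom{k}{p}$ rows $H_{p,v}[T]$ is multiplied by the $D\times D$ matrix $M_{p,v}$. This costs $\OO(nD^2\binom{k}{p})$.
\end{enumerate}

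\textbf{Summing over layers.} Summing over $p$ and using $\sum_p\binom{k}{p}=2^k$ and $\sum_p\binom{k}{p-1}\le2^k$ gives $\OO(2^k(mD+nkD+nD^2))$. The final extraction sums $n$ rows and takes one inner product, costing $\OO(nD)$, which is absorbed.

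\textbf{Storage.} At any time the algorithm holds two consecutive layers. Each has $n\binom{k}{p}$ rows of length $D$, giving $\OO(nD\max_p\binom{k}{p})$ words. The vectors take $nk$ words and the retained matrices take $knD^2$ words. Discarding the previous layer after it is processed, as step~4 prescribes, keeps only a constant number of layers alive.

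\textbf{Overall time.} Set $D=k$. One trial then costs $2^kk^{\OO(1)}(n+m)$, since $knD^2=k^3n$ is dominated. By Theorem~\ref{thm:human-path-estimator}, the number of trials is $\OO(C_k\varepsilon^{-2}\log(2/\delta))$ with $C_k=\OO(k^2)$. Multiplying, and adding the $\OO(\log(2/\delta))$ cost of computing the group averages and their median, gives the stated bound. The trivial cases $k>n$ and $k=1$ are handled in constant and $\OO(1)$ time respectively.

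\textbf{Main point of care.} The only real subtlety is to avoid charging the matrix multiplication once per incoming arc. Because $M_{p,v}$ depends only on $(p,v)$, the summation $U_{p,v}$ is formed before the wedge and the matrix are applied. Each exterior coordinate at $(p,v)$ therefore incurs exactly one vector--matrix product, so the $D^2$ factor multiplies $n$ and not $m$.
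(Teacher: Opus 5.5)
Your proposal is correct and follows the paper's own proof essentially step for step. You use the same per-layer costs for the arc scan, the wedge update and the row--matrix products, summed with $\sum_p\binom{k}{p}=2^k$, the same two-layer storage accounting, and the same appeal to Theorem~\ref{thm:human-path-estimator} for the number of trials. The differences are cosmetic and absorbed in the stated bounds: you bound the wedge incidences crudely via $p\le k$ instead of $\sum_p p\binom{k}{p}=k2^{k-1}$, and you leave implicit both the clearing of the incoming tables $U_{p,v}$ and their storage alongside the two layers.
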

\begin{proof}
At layer $p$, the arc scan adds rows in
$\OO(mD\binom{k}{p-1})$ operations. All endpoint wedge updates cost
$\OO(nDp\binom{k}{p})$, and their row--matrix multiplications cost
$\OO(nD^2\binom{k}{p})$.
Use $\sum_p\binom{k}{p}=2^k$ and
$\sum_p p\binom{k}{p}=k2^{k-1}$. Initialization and clearing tables
fit the bound, and sampling $kn$ matrices costs $\OO(knD^2)$.
Only two consecutive exterior layers and the current incoming sums are
needed, giving the table-storage bound. The remaining terms store the
vertex vectors and matrices. For the path algorithm alone, matrices can
instead be generated when their role--image pair is processed and then
discarded after use on all its coordinates. The stated bound also permits
retaining them, which will be useful when we reuse randomness across
separator assignments for trees.
Finally apply Theorem~\ref{thm:human-path-estimator}.
\end{proof}

Appendix~\ref{app:path-costs} spells out these costs step by step, including
subset indexing and table initialization.

The source of the base $2$ is now visible. Every transition adds one
exterior vector, with only $k2^{k-1}$ coordinate incidences over all
degrees. A dynamic program that explicitly carried pairs of exterior
coordinates would instead have
$\sum_p\binom{k}{p}^2=\binom{2k}{k}$ such coordinates. Our algorithm
does not form that tensor-square table. Its small auxiliary rows and
repeated random trials estimate the required squared mass.

The Gaussian model has served to describe the algorithm and prove its
moments. Section~\ref{sec:finite-new} gives one exact finite-randomness
route, after we introduce nonnegative weights.

%

\section{Finite randomness and bit complexity}\label{sec:finite-new}
The preceding analysis uses Gaussian random variables. We now give an
implementation using finitely many fair random bits and exact integer
and rational arithmetic. We first replace the Gaussian variables by a
finite distribution with the same moments through order four. We then
bound the bit lengths of the intermediate values and the number of
random bits used. The construction applies to the unweighted path,
forest, and graph-pattern estimators considered in this paper.

\subsection{Replacing Gaussian variables by a finite distribution}
\begin{definition}[Discrete coordinate law]\label{def:xi-new}
Let $\xi$ take values $-2,-1,0,1,2$ with respective probabilities
$1/12,1/6,1/2,1/6,1/12$.
\end{definition}
The distribution is symmetric about zero, so its odd moments vanish.
Its second and fourth moments are
\[
 \E[\xi^2]=\frac{2}{6}+\frac{8}{12}=1,
 \qquad
 \E[\xi^4]=\frac{2}{6}+\frac{32}{12}=3.
\]
Thus its moments of orders zero through four are $(1,0,1,0,3)$,
exactly those of a standard Gaussian variable.

\begin{lemma}[Moment matching]\label{lem:matching-new}
Replacing independent standard Gaussian variables by independent copies
of $\xi$ preserves the expectation of every polynomial whose degree in
each individual variable is at most four.
\end{lemma}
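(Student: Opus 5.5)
By linearity of expectation, it suffices to prove the claim for a single monomial. Let the polynomial be $f(x_1,\ldots,x_N)$, and assume each $x_i$ appears with exponent at most four in every monomial. Write $f=\sum_\alpha c_\alpha\prod_i x_i^{\alpha_i}$ with every $\alpha_i\le4$. Then
\[
  \E f(x)=\sum_\alpha c_\alpha\,\E\prod_i x_i^{\alpha_i}.
\]
This holds both when the $x_i$ are independent standard Gaussians and when they are independent copies of $\xi$.

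The key step is independence. Since the variables are independent, the expectation of each monomial factors:
\[
  \E\prod_i x_i^{\alpha_i}=\prod_i \E x_i^{\alpha_i}.
\]
Each factor $\E x_i^{\alpha_i}$ is a moment of order at most four. By the computation after Definition~\ref{def:xi-new}, the moments of $\xi$ of orders zero through four are $(1,0,1,0,3)$. These are exactly the Gaussian values: odd moments vanish by symmetry, the second moment is $1$ and the fourth is $3$. So every factor agrees under the two laws, hence every monomial expectation agrees, and summing over $\alpha$ gives equality of $\E f$.

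There is no real obstacle. The only thing to check is that the factorization is legitimate: all moments involved are finite, which holds since both laws have finite moments of every order. The main point worth stating explicitly is that the hypothesis restricts the degree in each individual variable, not the total degree. Products of many distinct variables are therefore allowed, which is what the later applications need. For example, $Z^4$ has degree four in each vector coordinate $g_v(j)$ and in each matrix entry, even though its total degree is large.
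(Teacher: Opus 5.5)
Your proof is correct and follows the paper's argument: expand into monomials, factor each monomial expectation by independence, and use that the moments of orders zero through four, $(1,0,1,0,3)$, agree for $\xi$ and for a standard Gaussian. Your explicit note that the hypothesis bounds the degree in each individual variable, not the total degree, matches the paper's own remark after the lemma.
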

\begin{proof}
Expand the polynomial into monomials. Independence expresses the
expectation of each monomial as a product of scalar moments. Each
exponent is at most four, so every scalar moment agrees under the two
distributions. Summing proves the claim.
\end{proof}
The condition concerns the degree in each individual random variable,
not the total degree of the polynomial. Its total degree may be much
larger than four.

\subsection{An exactly computable estimator}
Recall that the Gaussian construction assigns a vector $g_v\in\R^k$
to each host vertex and a matrix $M_{i,v}\in\R^{D\times D}$ to each
pattern-role/image pair. The vector coordinates have distribution
$N(0,1)$, whereas the matrix entries have distribution $N(0,1/D)$.
The final scalar amplitude includes a factor $\sqrt D$, and one trial
returns $Y=Z^2/k!$.

To implement the estimator, we sample every coordinate of $g_v$ and
every entry of a matrix $H_{i,v}$ independently from the law of $\xi$.
We use $H_{i,v}$ in place of $M_{i,v}$ in the graph recurrence and omit
the final factor $\sqrt D$. All additions and multiplications in this
recurrence are then integer operations. We denote the resulting scalar
by $Z_0$.

\begin{theorem}[Discrete scalar estimator]\label{thm:discrete-new}
For the unweighted embedding computations of
Sections~\ref{sec:coordinate-implementation}, and \ref{sec:trees}, the estimator
\begin{equation}\label{eq:discrete-output-new}
 Y_0=\frac{Z_0^2}{k!D^{k-1}}
\end{equation}
has the same first two moments as the corresponding Gaussian estimator
$Y$. In particular, it is unbiased and satisfies the same second-moment
bound. If there are no embeddings, it is identically zero.
\end{theorem}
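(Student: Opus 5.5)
We reduce the claim to \cref{lem:matching-new} by exhibiting $Y_0$ and $Y^2_0$ as polynomials of bounded per-variable degree, and comparing them with the Gaussian $Y$ and $Y^2$ after a deterministic rescaling.

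\textbf{Step 1: rescaling.} In the Gaussian model write $M_{i,v}=D^{-1/2}H_{i,v}$, where $H_{i,v}$ has independent $N(0,1)$ entries. Each embedding's signature uses exactly one matrix per pattern role, so every monomial in the expansion of \cref{lem:human-path-expansion} (or its forest analogue) contains $k$ matrix factors. Hence
\[
 Z=\sqrt D\cdot D^{-k/2}Z_0(g,H)=D^{-(k-1)/2}Z_0(g,H),
\]
where $Z_0(g,H)$ is the same integer-coefficient polynomial that the discrete recurrence computes. Consequently $Y=Z_0(g,H)^2/(k!D^{k-1})$, which is exactly formula~\eqref{eq:discrete-output-new} evaluated at Gaussian inputs. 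It therefore suffices to show that $\E[Z_0^2]$ and $\E[Z_0^4]$ are unchanged when the Gaussian coordinates of $g_v$ and $H_{i,v}$ are replaced by independent copies of $\xi$.

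\textbf{Step 2: degree count, the main point to verify.} I will show that $Z_0$ has degree at most one in each individual random variable. Then $Z_0^2$ and $Z_0^4$ have degree at most two and four in each variable, and \cref{lem:matching-new} applies directly.

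For the matrix entries, $Z_0=\sum_P\Delta_P\,b_0^\top H_{1,v_1}\cdots H_{k,v_k}b_0$. Each term uses each matrix $H_{i,v}$ at most once, since a role $i$ occurs once. A single matrix product is multilinear in the entries of each matrix it contains, so every monomial has degree at most one in each entry of $H_{i,v}$. For forests, each role--image pair likewise appears at most once in a given embedding's signature; I will confirm this against the construction of \cref{sec:trees}. Summing over separator assignments, as the algorithm does before squaring, keeps the degree bound, because the bound holds term by term.

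For the vertex vectors, $\Delta_P$ is a determinant whose columns are distinct vectors $g_{v_j}$, since the vertices of an embedding are injective. It is therefore linear in each column and has degree at most one in each coordinate $g_v(j)$. Walks that repeat a vertex contribute exactly zero as polynomials, by the alternating property, so they never create higher powers. The only thing to check carefully is that this identity holds as a polynomial identity, not merely almost surely. It does, because the exterior algebra computation is formal.

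\textbf{Step 3: conclusion.} By Step 2 and \cref{lem:matching-new}, $\E Y_0=\E Y$ and $\E Y_0^2=\E Y^2$. Hence $Y_0$ is unbiased and obeys the bound of \cref{thm:human-path-estimator} (and the forest analogue). If there are no embeddings, $Z_0$ is the zero polynomial: every surviving term is indexed by an embedding, and all non-injective walks vanish identically. So $Y_0=0$ for every sample.
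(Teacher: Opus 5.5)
Your proposal is correct and follows the paper's proof essentially step for step. You use the same rescaling $M_{i,v}=D^{-1/2}H_{i,v}$ to get $Z=D^{-(k-1)/2}Z_0$, the same observation that every surviving term is a determinant of distinct vertex vectors times a product of matrices with distinct roles (so $Z_0$ is multilinear in each random scalar), and the same application of \cref{lem:matching-new} to $Z_0^2$ and $Z_0^4$. Your point that the vanishing of repeated-vertex walks is a formal polynomial identity matches the paper's remark that only the final polynomial, not each intermediate table, needs to be multilinear.
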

\begin{proof}
First let the entries of $H_{i,v}$ be independent standard Gaussians.
Then $M_{i,v}=D^{-1/2}H_{i,v}$ has the required normalized Gaussian
law. Every embedding term contains exactly $k$ matrix factors, one for
each pattern role. Including the final scalar projection gives
\[
 Z=\sqrt D\,(D^{-1/2})^k Z_0
   =D^{-(k-1)/2}Z_0.
\]
Consequently $Y=Z^2/k!=Z_0^2/(k!D^{k-1})$ under this Gaussian law.
This identity moves the normalization entirely into the final rational
denominator.

For arbitrary assignments of the vectors and matrices, the exterior
product eliminates every map with a repeated host vertex. Each
surviving term contains a determinant of $k$ distinct vertex vectors
and an ordered product of $k$ matrices with distinct role indices.
Expand both the determinant and the matrix product into scalar
monomials. Each vertex-coordinate variable occurs at most once in
such a monomial. Each matrix-entry variable also occurs at most once,
since its role occurs only once. Therefore $Z_0$ is individually
multilinear in the random scalar variables.

It follows that $Z_0^2$ has degree at most two in each random variable,
and $Z_0^4$ has degree at most four in each random variable.
Lemma~\ref{lem:matching-new} preserves the expectations of both
polynomials when all standard Gaussian entries are replaced by copies
of $\xi$. Dividing by the fixed normalization proves the two moment
identities. If no embedding exists, the exterior expansion is zero
for every assignment, so $Z_0=Y_0=0$.
\end{proof}
The argument concerns the polynomial after noninjective terms have
vanished. It does not assume that every intermediate table is
individually multilinear. The same sampled vectors and matrices are
reused throughout a trial, including across separator assignments.

\subsection{Bit lengths of intermediate values}
We take $D=k$ and assume $2\le k\le n$; the smaller and infeasible
cases are handled directly. Every sampled scalar is an integer of
absolute value at most two. We bound the integers created by the
direct graph recurrences.

Consider a stored entry containing the factors of $d\le k$ emitted
pattern roles. For a fixed partial map, an exterior coordinate is a
$d\times d$ minor. Expanding its determinant gives at most $d!$ terms,
each of absolute value at most $2^d$, so the minor has absolute value
at most $d!2^d$.
An entry of a product of $d$ raw matrices has at most
$D^{\max\{d-1,0\}}$ summands, each of absolute value at most $2^d$.
For $d=0$, the exterior and matrix factors are identities.

The expansion of a stored entry contains at most $n^k$ partial maps.
The recurrences count each such map once, with any boundary images
fixed by the state. Thus a uniform bound on the absolute value of a
stored integer is
\[
 n^k k!4^kD^{k-1}.
\]
With $D=k$, the required bit length is
\begin{equation}\label{eq:bit-bound-new}
 B_0=\OO\bigl(k\log(n+2)+k\log(k+2)\bigr).
\end{equation}
This argument also bounds sums over separator assignments: an
embedding has a unique restriction to the separator, so those sums
do not introduce additional copies of an embedding.

We must also bound temporary values before cancellations occur.
Multiplying two stored entries uses at most $2B_0+\OO(1)$ bits.
At an exterior product, a scalar output coordinate sums at most
$3^kD$ scalar products, adding only $\OO(k+\log(D+1))$ bits.
The vertex and adjacency sums add at most $\OO(\log(n+2))$ bits
per such aggregation. Equivalently, their temporary values can be
bounded by the sum of the absolute values of their expanded terms.
Hence the temporary products and partial sums also have polynomial
bit length.

The denominator $C=k!D^{k-1}$ in~\eqref{eq:discrete-output-new} has
$\OO(k\log(k+2))$ bits. All trials use this same denominator.
An average of $r$ trials can therefore be computed as
\[
 \frac{\sum_{j=1}^{r}Z_{0,j}^2}{rC},
\]
whose numerator and denominator require only
$\OO(B_0+\log(r+1))$ bits. Taking the median of group averages uses
exact rational comparisons. Thus the estimator and its amplification
admit exact arithmetic with polynomial bit overhead.

\subsection{Sampling with a bounded number of random bits}
The law of $\xi$ can be sampled exactly by rejection. Draw four fair
bits, interpret them as an integer in $\{0,\ldots,15\}$, and repeat
if the integer is at least $12$. Among the twelve accepted values,
assign six to zero, two to each of $1,-1$, and one to each of $2,-2$.
Each attempt succeeds with probability $3/4$ and gives the required
law upon acceptance.

This sampler has constant expected cost, but it has no deterministic
bound on the number of attempts. We obtain a bounded-time algorithm
by allowing a small additional failure probability. Let $R$ be the
total number of trials in the amplified estimator, chosen using
statistical failure probability $\delta/2$. Sampling all vertex
vectors and all role--image matrices requires at most
\[
 T=Rkn(1+D^2)
\]
scalar draws. The same samples are used for all states and separator
assignments within a trial.

For every scalar draw, allow at most
\[
 J=\left\lceil\log_2\frac{2(T+1)}{\delta}\right\rceil
\]
attempts. If any draw exhausts this allowance, stop and return zero.
A fixed draw exhausts its allowance with probability $(1/4)^J$, so
by a union bound the probability that any draw does so is at most
\[
 T(1/4)^J\le T2^{-J}\le\delta/2.
\]
Couple this algorithm with the uncapped exact sampler using the same
random bits. Unless an allowance is exhausted, their outputs agree.
The uncapped estimator fails with probability at most $\delta/2$;
therefore the capped algorithm fails with probability at most $\delta$.
This argument does not require independence conditional on completing
all sampling calls. If the true count is zero, both a completed trial
and a sampling abort return zero.

The implementation uses at most $4TJ$ fair random bits. Together with
\eqref{eq:bit-bound-new} and the trial bound
$R=\OO(k^2\varepsilon^{-2}\log(2/\delta))$, this proves polynomial
bit overhead in $k$, $\log(n+2)$, $\log(1/\varepsilon)$, and
$\log(2/\delta)$ relative to the stated arithmetic-operation bounds.
All recurrence values are computed exactly; no numerical rounding is
needed.
%

\section{Trees through small-component separators}\label{sec:trees}
A path extension inserts one vertex, so one exterior factor always has
degree one. A tree can branch into two large subtrees. Joining their
partial embeddings requires multiplying two exterior elements, and a
direct multiplication may examine $3^k$ disjoint pairs of coordinate
sets. Thus the path running-time argument does not extend merely by
rooting the tree and multiplying child tables.

We remove a small set of pattern vertices so that every remaining
component is small, and enumerate the images of the removed vertices.
For a fixed assignment, an edge from a component to the separator
restricts the possible image of its endpoint in the component.
We compute the component tables separately and multiply them in a
fixed order. Each merge has a factor of small exterior degree.

The estimator counts embeddings of a specified tree or forest.
All separator assignments use the same random vectors
and matrices. We add their contributions before squaring, and prove
that the resulting sum contains each embedding exactly once.
The same matrix and determinant moment bounds used for paths then
apply to this sum.

We first treat an undirected tree $T$ on $k$ vertices in a simple
undirected host $G$ with $n$ vertices and $m$ edges. The extensions to
forests and orientations are given below. If $k>n$, return zero;
the empty pattern has one embedding, and a one-vertex pattern has
$n$ embeddings. We therefore assume $2\le k\le n$ in the construction.

\subsection{Multiplying component states}
Paths could propagate a row because their factors arrived sequentially.
A component must be composable with other components on either side, so
we retain a full matrix in each exterior coordinate until the final
boundary contraction. This is the same forward matrix product as before,
with more information retained in a partial state.

We use matrix-valued exterior elements in
$\mathcal A_D=\Lambda(\R^k)\otimes\Mat_D(\R)$, with product
\begin{equation}\label{eq:algebra-product}
 (x\otimes A)\star(y\otimes B)=(x\wedge y)\otimes AB.
\end{equation}
This algebra is associative and generally noncommutative. Every recurrence
using this product specifies a fixed multiplication order. Every summand follows the same fixed order of pattern vertices.

\begin{lemma}[Exterior multiplication cost]\label{lem:small-product}
An exterior element can be multiplied by a vector in $\OO(k2^k)$ scalar
operations. If one factor has degree at most $h\le k$, a product in
$\mathcal A_D$ can be computed using
\begin{equation}\label{eq:small-product-cost}
 \OO\left(kD^3\,2^k\sum_{j=0}^h\binom{k}{j}\right)
\end{equation}
operations. When $h\le k/2$, this is at most
$k^{\OO(1)}D^3 2^{(1+\HH(h/k))k}$, where
$\HH(x)=-x\log_2x-(1-x)\log_2(1-x)$ and $\HH(0)=0$.
The same bound holds if the small-degree factor is on the left.
\end{lemma}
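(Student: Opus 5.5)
We prove the three assertions separately, starting from the basis multiplication rule~\eqref{eq:wedge-product}, and treat all subset indexing (ranking $k$-bit masks, the sign $(-1)^{\iota(A,B)}$) as precomputed in $\OO(k2^k)$ or computed on the fly in $\OO(k)$ per pair.

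\emph{Vector product.} For $x\in\Lambda(\R^k)$ and $g\in\R^k$, split $x$ into homogeneous parts and apply the coordinate formula~\eqref{eq:coordinate-wedge} in each degree $p$. The output coordinate $T$ receives one term for each $j\in T$, so the work is $\sum_p p\binom{k}{p}=k2^{k-1}$ incidences. Each incidence costs $\OO(1)$ once $\operatorname{pos}_T(j)$ is known, which we read off by a popcount of $T\cap[j-1]$. This gives $\OO(k2^k)$.

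\emph{Product in $\mathcal A_D$.} Write $x=\sum_A e_A\otimes X_A$ and $y=\sum_B e_B\otimes Y_B$, where the degree of $x$ is at most $h$. Then
\[
 x\star y=\sum_{A\cap B=\varnothing}(-1)^{\iota(A,B)}e_{A\cup B}\otimes X_AY_B .
\]
We enumerate the pairs with $|A|\le h$, $A\cap B=\varnothing$, by looping over $A$ and then over subsets $B\subseteq[k]\setminus A$. There are at most $\sum_{j\le h}\binom kj\,2^{k-j}\le 2^k\sum_{j\le h}\binom kj$ such pairs. For each pair we do the following:
\begin{itemize}[leftmargin=1.7em,itemsep=2pt]
\item compute $\iota(A,B)$ in $\OO(k)$ operations, for instance by counting the elements of $B$ below each $a\in A$ with bitmask popcounts;
\item compute the $D\times D$ product $X_AY_B$ in $\OO(D^3)$ operations;
\item add the result into the slot for $A\cup B$ in $\OO(D^2)$ operations.
\end{itemize}
This gives $\OO((k+D^3)2^k\sum_{j\le h}\binom kj)\subseteq\OO(kD^3 2^k\sum_{j\le h}\binom kj)$. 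The case where the small-degree factor is on the left is this same enumeration. If instead the right factor has small degree, we swap the roles of the loops, enumerating $B$ with $|B|\le h$ and $A$ disjoint from it. The matrix order $X_AY_B$ is kept, so noncommutativity causes no issue.

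\emph{Entropy form.} For $h\le k/2$ we use the standard estimate $\sum_{j\le h}\binom kj\le 2^{\HH(h/k)k}$. One proof: $1=(h/k+(1-h/k))^k\ge\sum_{j\le h}\binom kj (h/k)^j(1-h/k)^{k-j}$, and each summand is at least $\binom kj(h/k)^h(1-h/k)^{k-h}=\binom kj 2^{-\HH(h/k)k}$ because $h/k\le 1/2$. The case $h=0$ is trivial. Substituting gives $k^{\OO(1)}D^3 2^{(1+\HH(h/k))k}$.

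The only point needing care is the pair count, namely that restricting $|A|\le h$ rather than enumerating all $3^k$ disjoint pairs gives the stated sum. The rest is bookkeeping.
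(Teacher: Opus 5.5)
Your proposal is correct and follows the paper's proof step for step. Both use the same three ingredients:
\begin{itemize}
\item the count of $k2^{k-1}$ incidences for multiplication by a vector;
\item the bound $\sum_{j\le h}\binom{k}{j}2^{k-j}$ on disjoint pairs, with an $\OO(k)$ sign computation and an elementary $\OO(D^3)$ matrix product per pair;
\item the binomial-expansion estimate $\sum_{j\le h}\binom{k}{j}\le 2^{k\HH(h/k)}$ for $h\le k/2$.
\end{itemize}
The remaining differences are cosmetic: you put the small-degree factor on the left first (the paper puts it on the right), and you compute signs with popcounts instead of a linear scan.
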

\begin{proof}
For multiplication by a vector, each nonzero incidence has the form
$(A,j)\mapsto A\cup\{j\}$ with $j\notin A$, giving at most $k2^{k-1}$
incidences. If $B$ indexes the factor of degree at most $h$, there are at
most $\sum_{j\le h}\binom{k}{j}2^{k-j}$ disjoint pairs $(A,B)$. Multiply
their matrix coefficients by the elementary $D^3$ algorithm and accumulate
with the sign in~\eqref{eq:wedge-product}. The sign can be computed in
$\OO(k)$ index operations by scanning the coordinates and maintaining
the number of earlier elements of the other set. This accounts for the
factor $k$ in~\eqref{eq:small-product-cost}.
For the entropy estimate, take $0<h\le k/2$, put $x=h/k$, and expand $(x+(1-x))^k$. For $j\le h$,
$x^j(1-x)^{k-j}\ge x^h(1-x)^{k-h}$; hence
$\sum_{j\le h}\binom{k}{j}\le[x^h(1-x)^{k-h}]^{-1}=2^{k\HH(x)}$.
The cases $h=0$ and $x=1/2$ follow directly.
\end{proof}

\subsection{The separator and the exponential-base parameter}
\begin{lemma}[Small components with two boundary vertices]\label{lem:separator-new}
Let $T$ be a tree on $k$ vertices and let $c\ge2$ be an integer. In polynomial
time one can find $S\subseteq V(T)$ of size at most $2c+2$ such that every
component $C$ of $T-S$ has at most $h=\lfloor k/c\rfloor$ vertices and
$|N_T(C)|\le2$.
\end{lemma}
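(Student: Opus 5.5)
We will prove the lemma in two stages: first find a small set that breaks $T$ into small components, then close it under tree ``least common ancestors'' so that each component has at most two boundary vertices. Root $T$ at an arbitrary vertex $r$ and let $T_v$ denote the subtree rooted at $v$.

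\emph{Stage one (small components).} We use the standard greedy bottom-up procedure. Process vertices in post-order and maintain, for each vertex, the number of not-yet-separated vertices hanging below it. When the remaining subtree at $v$ reaches more than $h$ vertices, put $v$ into a set $S_0$ and delete its remaining subtree. Each child's remaining part has at most $h$ vertices, because otherwise that child would have been selected earlier. Consequently, every component of $T-S_0$ has at most $h$ vertices. Each selected vertex removes at least $h+1$ pattern vertices, and $h+1>k/c$, so $|S_0|<c$, that is, $|S_0|\le c$. All of this runs in linear time.

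\emph{Stage two (two boundary vertices).} Let $S$ be the closure of $S_0$ under pairwise lowest common ancestors with respect to the root. It is a standard fact that the LCA closure of a set of $q$ vertices in a rooted tree has at most $2q-1$ vertices. We also include $r$ if that is convenient. This gives $|S|\le 2c\le 2c+2$, and the slack $+2$ absorbs the root and small corner cases. Adding vertices to $S$ only splits components, so every component of $T-S$ still has at most $h$ vertices.

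The main point is the boundary bound. Let $C$ be a component of $T-S$ and let $u$ be its topmost vertex. Then $C$ has at most one separator neighbour above it, namely the parent of $u$. Suppose $C$ had two separator neighbours $s_1,s_2$ below it, hanging from vertices of $C$. Then $\mathrm{lca}(s_1,s_2)$ lies in $C$. This holds both when $s_1,s_2$ hang from different vertices of $C$ and when they hang from the same vertex, since in the latter case the LCA is that vertex. Any LCA of separator vertices belongs to $S$, so this contradicts $C\cap S=\varnothing$. Hence $C$ has at most one neighbour below and at most one above, which gives $|N_T(C)|\le2$.

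The step requiring the most care is the counting bound $|{\rm LCA\ closure}|\le 2|S_0|-1$. We will prove it by sorting $S_0$ in DFS order: the closure then consists of $S_0$ together with the LCAs of consecutive pairs, which number at most $|S_0|-1$. Finally, polynomial time is immediate, since LCAs of all pairs can be computed directly.
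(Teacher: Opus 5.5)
Your proposal is correct and follows the paper's proof essentially step by step. You use the same greedy bottom-up marking, which gives at most $c$ marks and leaves components of size at most $h$. You then close under least common ancestors, so that each component has at most one boundary vertex above it and at most one below. The differences are cosmetic: you bound the closure by taking LCAs of DFS-consecutive marks, where the paper counts branching vertices of the minimal subtree connecting $S_0$; and you correctly note that adding the root is optional, since $\mathrm{lca}(r,x)=r$ preserves closure.
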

\begin{proof}
Root the tree. Process it from the leaves upwards, keeping the unmarked
vertices still attached to each processed root. If this residual subtree
has more than $h$ vertices, mark its root and reset its residual size to
zero. Each mark can be charged to more than $h$ previously uncharged
vertices. Thus there are at most $\lfloor k/(h+1)\rfloor\le c$ marks,
and every component left after removing the marks has size at most $h$.
Add the root and close the marked set under least common ancestors.
If $S_0$ is the marked set together with the root, its least-common-ancestor
closure has size at most $2|S_0|\le2c+2$: the additional vertices are
branching vertices of the minimal rooted subtree connecting $S_0$.
If a remaining component had three boundary vertices, at least two would
lie below it in distinct branches, and their least common ancestor would
belong to that component. This contradicts closure. Removing further
vertices does not increase component sizes.
\end{proof}
This is the separator statement used in~\cite[Lemma 5.18]{FLPS16}; the
integer threshold above also covers $h=0$.

\paragraph{Which separator properties the recurrence needs.}
The two-boundary version is retained to make the connection with the
representative-family construction explicit. For the recurrence below,
only the component-size bound is needed: after fixing all images of $S$,
every edge to the separator becomes a restriction on the allowed image
of its other endpoint.
Thus one may also use the greedy marked set before adding the root and
least common ancestors. It has size at most $c$, and all remaining
components still have at most $h$ vertices. The stated theorem holds with
either choice. This observation will also justify the forest extension.

The parameter $c$ determines the tradeoff. A larger $c$ makes the remaining
components smaller and their exterior products cheaper, but increases the
number of separator images to enumerate. We choose it as a function of
$\eta$ alone, so this enumeration remains polynomial in the host size for
every fixed $\eta$.

For $0<\eta\le1$, fix an integer $c\ge2$ with
\begin{equation}\label{eq:c-choice}
 \HH(1/c)\le\log_2(1+\eta/2).
\end{equation}
Such a $c$ exists because $\HH(x)\to0$ as $x\to0$.
Moreover $\HH(1/c)\le(\log_2 c+\log_2 e)/c$, so one can take
$c=\OO(\eta^{-1}\log(2/\eta))$. Since $h/k\le1/c$,
Lemma~\ref{lem:small-product} bounds each small-factor multiplication by
$k^{\OO(1)}D^3(2+\eta)^k$ operations. For $\eta>1$ use $\eta=1$.

\subsection{Pinning the separator}
For a concrete example, remove the centre $r$ of a seven-vertex tree
whose three branches are the edges $a-b$, $c-d$, and $e-f$,
with $r$ adjacent to $a$, $c$, and $e$.
Once $r$ is sent to a host vertex $x$, the images of $a$, $c$, and $e$
must all be neighbours of $x$. The three component tables can then
be computed separately. Their chosen images may overlap, so the
exterior product is still needed when combining the tables: any
repeated host vertex makes the corresponding term zero.

Fix a separator $S$. We enumerate the assignments
\begin{equation}\label{eq:separator-amplitude}
 \mathcal S=
 \{\sigma:S\to V(G):\sigma\text{ is injective and preserves
 every edge of }T[S]\}.
\end{equation}
For $i\notin S$ and $\sigma\in\mathcal S$, define its allowed images by
\begin{equation}\label{eq:modified-unary}
 L_i^\sigma=
 \{v\in V(G)\setminus\sigma(S):
       v\sigma(j)\in E(G)\text{ for every }j\in S\cap N_T(i)\}.
\end{equation}
The assignment $\sigma$ checks edges entirely within the separator.
The sets $L_i^\sigma$ check edges from the separator to a component
and exclude separator images. Edges within a component will be
checked in its recurrence. These three classes partition the edges
of the pattern.

List $S$ in a fixed order, list the components as $C_1,\ldots,C_\ell$,
and fix a root and child order within each component.
The global order of pattern vertices is $S$ followed by the preorder
traversals of $C_1,\ldots,C_\ell$. Write it as $i_1,\ldots,i_k$.
It is independent of $\sigma$ and of all choices of host images.

Independently sample a standard Gaussian vector $g_v\in\R^k$ for
each host vertex, and a matrix $M_{i,v}\in\R^{D\times D}$ with
independent $N(0,1/D)$ entries for each pattern vertex $i$ and host
vertex $v$. The matrices and vertex vectors are mutually independent.
Set $A_{i,v}=g_v\otimes M_{i,v}$.

For a rooted component and a vertex $i$ with ordered children
$j_1,\ldots,j_d$, compute
\begin{equation}\label{eq:tree-recurrence-new}
 Q^\sigma_{i,v}=
 \begin{cases}
 A_{i,v}\star
 \displaystyle\mathop{\bigstar}_{a=1}^d
       \left(\sum_{u\in N_G(v)}Q^\sigma_{j_a,u}\right),
       &v\in L_i^\sigma,\\[2mm]
 0,&v\notin L_i^\sigma.
 \end{cases}
\end{equation}
The state sums the contributions of maps from the subtree rooted
at $i$ with $i$ sent to $v$. The neighbour sum checks the edge
$ij_a$. Products over children follow their fixed order.
The empty product is $1\otimes I_D$, so a leaf state is $A_{i,v}$
when $v$ is allowed and zero otherwise.

Let $r_C$ be the root of component $C$ and put
$B_C^\sigma=\sum_v Q^\sigma_{r_C,v}$.
Combine the component answers and sum over separator assignments:
\begin{equation}\label{eq:tree-global-new}
 \begin{aligned}
 A&=\sum_{\sigma\in\mathcal S}
       \left(\mathop{\bigstar}_{i\in S}A_{i,\sigma(i)}\right)
       \star B_{C_1}^\sigma\star\cdots\star B_{C_\ell}^\sigma,\\
 Z&=\sqrt D\,b_0^\top[A]_{[k]}b_0,
 \qquad Y=Z^2/k!,\qquad b_0=e_1\in\R^D.
 \end{aligned}
\end{equation}
Here $[A]_{[k]}$ is the matrix coefficient of the unique top-degree
exterior basis element. The same sampled $g_v$ and $M_{i,v}$ are used
for every $\sigma$. In particular, we sum the scalar contributions of
all assignments before taking the square.

\subsection{Why the estimator counts every embedding}
For an embedding $f$ of $T$ into $G$, define
\[
 \Delta_f=\det[g_{f(i_1)}\ \cdots\ g_{f(i_k)}],
 \qquad
 \chi(f)=\sqrt D\,b_0^\top
 M_{i_1,f(i_1)}\cdots M_{i_k,f(i_k)}b_0.
\]
Both products follow the fixed global order of pattern vertices.

\begin{lemma}[Exact tree expansion]\label{lem:tree-expansion-new}
The scalar in~\eqref{eq:tree-global-new} satisfies
\begin{equation}\label{eq:tree-unweighted-expansion}
 Z=\sum_{f\in\Emb(T,G)}\Delta_f\chi(f).
\end{equation}
Each embedding occurs exactly once.
\end{lemma}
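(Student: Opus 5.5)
We prove the lemma by expanding every table into a sum over partial maps, in the same way as Lemma~\ref{lem:human-path-expansion}, and then collapsing the sum over separator assignments. The first step is a claim about a single component $C$ and a fixed $\sigma\in\mathcal S$. For a vertex $i$ of $C$ and a host vertex $v$, we claim
\[
 Q^\sigma_{i,v}=\sum_{\phi}\Bigl(\bigwedge_{j} g_{\phi(j)}\Bigr)\otimes\prod_j M_{j,\phi(j)}.
\]
Here $\phi$ ranges over maps from the subtree $T_i$ of $C$ rooted at $i$ that satisfy three conditions: $\phi(i)=v$; $\phi(j)\in L^\sigma_j$ for every $j$; and every edge of $T_i$ is mapped to an edge of $G$. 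Injectivity is not imposed. Both the wedge and the matrix product are taken in the preorder of $T_i$ with the fixed child order.

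We prove this claim by induction from the leaves. The leaf case is the definition. In the inductive step, the neighbour sum over $u\in N_G(v)$ enumerates the image of each child $j_a$ subject to the edge $ij_a$. Distributivity of $\star$ over sums then produces one term for each tuple of child maps. The key point is ordering. Preorder of $T_i$ is $i$, followed by the preorder of $T_{j_1}$, then that of $T_{j_2}$, and so on. The product $A_{i,v}\star Q_{j_1}\star\cdots\star Q_{j_d}$ multiplies in exactly this order. Because $\star$ is associative, as \eqref{eq:algebra-product} shows, the wedge factors and the matrix factors both come out in preorder. Summing over $v$ gives the analogous expansion of $B^\sigma_C$.

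Next we substitute these expansions into \eqref{eq:tree-global-new}. For fixed $\sigma$, the product over $i\in S$ followed by $B^\sigma_{C_1},\ldots,B^\sigma_{C_\ell}$ becomes a sum over maps $f:V(T)\to V(G)$ with $f|_S=\sigma$. Each component map satisfies the list and internal-edge conditions above. The corresponding term is
\[
 (g_{f(i_1)}\wedge\cdots\wedge g_{f(i_k)})\otimes M_{i_1,f(i_1)}\cdots M_{i_k,f(i_k)},
\]
with factors in the global order $i_1,\ldots,i_k$.

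If $f$ is not injective, a vector is repeated and the term vanishes, by the alternating property. If $f$ is injective, Lemma~\ref{lem:wedge-coordinate-prelim} shows that the top coordinate is $\Delta_f$. Contracting with $\sqrt D\,b_0^\top(\cdot)b_0$ then gives $\Delta_f\chi(f)$.

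It remains to show that the surviving maps are exactly the embeddings, each counted once. The surviving $f$ are injective, and every edge of $T$ is preserved. This follows from the three-class edge partition noted after \eqref{eq:modified-unary}. Edges inside $S$ are preserved because $\sigma\in\mathcal S$. Edges between $S$ and a component are enforced by the lists $L^\sigma_i$. Edges inside a component are enforced by the neighbour sums. These classes are exhaustive because every edge of $T-S$ lies within a single component.

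Conversely, let $f\in\Emb(T,G)$. Its restriction $\sigma=f|_S$ is injective and edge-preserving, so $\sigma\in\mathcal S$. For each $i\notin S$, injectivity of $f$ gives $f(i)\notin\sigma(S)$, and edge preservation gives the adjacencies required by $L^\sigma_i$. Hence $f$ appears in the term for $\sigma$. It appears in no other term, because $f$ determines $\sigma$. Within that $\sigma$ it appears once, because the recurrence enumerates each map exactly once.

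The step that needs the most care is the ordering bookkeeping: the noncommutative product $\star$ must yield the same order of pattern vertices for every summand, independent of $\sigma$ and of the images. That guarantee is exactly what allows us to write the term as $\Delta_f\chi(f)$ with a single fixed determinant column order and matrix order. The rest is distributivity.
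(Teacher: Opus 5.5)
Your proposal is correct and follows essentially the same route as the paper. Both arguments expand each component state, by induction over the rooted subtree, into list-respecting, edge-preserving maps whose factors come out in preorder; both kill non-injective terms by the alternating property, and both use the unique restriction $f|_S=\sigma$ to show that every embedding appears exactly once. You spell out the three-class edge bookkeeping and the converse inclusion a bit more explicitly than the paper does.
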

\begin{proof}
For a fixed $\sigma$, induction on a component subtree expands
\eqref{eq:tree-recurrence-new} over its edge-preserving maps, with
the root image fixed and every vertex image in its allowed set.
At a leaf the claim is the initialization. At an internal vertex,
the neighbour sums list the possible images of each child, and the
ordered product combines one choice from each child subtree.
Each map has a unique such decomposition, so it appears once.
Its term contains the exterior and matrix factors of exactly those
subtree vertices, in preorder.

These maps may repeat a host vertex, either within a component or
between components. In the complete exterior product, every such
term contains the same vertex vector twice and is zero.
The sets $L_i^\sigma$ already exclude collisions with separator
images. Thus the surviving terms are precisely the embeddings.
Every embedding has a unique restriction $f|_S=\sigma$ and unique
component restrictions, so it occurs once in the sum over $\sigma$.
Its top exterior coefficient is $\Delta_f$, and its matrix product
has exactly the order defining $\chi(f)$. This proves the expansion.
\end{proof}

\begin{proposition}[Moments of the tree estimator]
\label{prop:tree-unweighted-moments}
Let $N_T=|\Emb(T,G)|$, set $D=k$, and put
$K_k=(k+1)(k+2)/2$. Then
\[
 \E Y=N_T,
 \qquad
 \E Y^2\le 3e^2K_kN_T^2.
\]
If $N_T=0$, every trial is zero.
\end{proposition}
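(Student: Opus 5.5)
The plan follows the proof of Theorem~\ref{thm:human-path-estimator} almost verbatim, with Lemma~\ref{lem:tree-expansion-new} replacing Lemma~\ref{lem:human-path-expansion}. By that lemma, for every fixed choice of vectors and matrices,
\[
  Z=\sum_{f\in\Emb(T,G)}\Delta_f\chi(f),
\]
where each embedding appears exactly once. Each $\chi(f)$ is a normalized matrix-product signature for the word $w(f)=((i_1,f(i_1)),\ldots,(i_k,f(i_k)))$ over role--image symbols, with layer $t$ indexed by the $t$-th pattern vertex $i_t$ of the fixed global order.

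\textbf{Step 1: injectivity of words.} The map $f\mapsto w(f)$ is injective, since the global order $i_1,\ldots,i_k$ is independent of $f$. Hence the coefficient family $a_w=\Delta_f$ for $w=w(f)$, and $a_w=0$ otherwise, is well defined. I also check that the matrices in layer $t$ are $M_{i_t,v}$ for $v\in V(G)$. These are independent across layers and symbols, because distinct $(t,v)$ give distinct pairs $(i_t,v)$. This independence is what Lemma~\ref{lem:matrix-signs} requires. The reversal of layer indices (row propagation versus column propagation) is handled exactly as for paths, by transposing the product and taking $\alpha=\beta=b_0$.

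\textbf{Step 2: conditional matrix moments.} Condition on all $g_v$ and apply Lemma~\ref{lem:matrix-signs} with $r=k$. This gives
\[
  \E_M Z^2=\sum_f\Delta_f^2
  \qquad\text{and}\qquad
  \E_M Z^4\le3(1+2/D)^{k-1}\Bigl(\sum_f\Delta_f^2\Bigr)^2.
\]

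\textbf{Step 3: grouping by image set.} Group the embeddings by their image set $U$, and let $m_U$ be the number of embeddings with image $U$. Each $\Delta_f$ is $\pm\Delta_U$, since the columns are a permutation of the vectors $\{g_u:u\in U\}$. Therefore $\sum_f\Delta_f^2=\sum_U m_U\Delta_U^2$ and $\sum_U m_U=N_T$.

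\textbf{Step 4: determinant moments.} Lemma~\ref{lem:det-moments} gives $\E Z^2=k!\,N_T$ and $\E Z^4\le3(1+2/k)^{k-1}K_k(k!)^2N_T^2$. Dividing by $k!$ and $(k!)^2$ respectively, and using $(1+2/k)^{k-1}\le e^2$, yields the two claimed bounds.

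\textbf{Step 5: the zero case.} If $N_T=0$, the expansion in Lemma~\ref{lem:tree-expansion-new} is empty for every choice of randomness. Hence $Z=0$ and $Y=0$ identically.

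\textbf{Main obstacle.} The only delicate point is Step 1. We must confirm that summing over separator assignments $\sigma$ \emph{before} squaring still produces a single linear combination of distinct words with one fixed layer structure. This holds because each embedding occurs once and its factors are ordered by the $\sigma$-independent global order. The dependence between terms that share matrices or vectors is then already absorbed by Lemma~\ref{lem:matrix-signs} and by the nonnegative-mass form~\eqref{eq:weighted-volume-new}.
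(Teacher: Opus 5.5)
Your proposal is correct and follows the paper's own proof: condition on the vertex vectors, apply Lemma~\ref{lem:matrix-signs} to the coefficient family $(\Delta_f)_f$ indexed by the distinct words in the fixed global order, then average over the vectors with Lemma~\ref{lem:det-moments}. The only difference is cosmetic: you group embeddings by image set and invoke~\eqref{eq:weighted-volume-new}, as in the path proof, whereas the paper expands the square and bounds $\E_g[\Delta_f^2\Delta_{f'}^2]$ pairwise by Cauchy--Schwarz, which is exactly how that weighted bound is proved.
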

\begin{proof}
Condition on all vertex vectors. An embedding is determined by its
word of images in the fixed pattern order. Different embeddings
therefore give different words, so Lemma~\ref{lem:matrix-signs}
applies to the coefficient family $(\Delta_f)_f$. Reversing layer
indices converts its matrix order to the forward convention above.
It gives
\[
 \E_M[Z^2\mid g]=\sum_f\Delta_f^2,
 \qquad
 \E_M[Z^4\mid g]\le
 3(1+2/D)^{k-1}\left(\sum_f\Delta_f^2\right)^2.
\]
Each embedding uses $k$ distinct vertex vectors, so
Lemma~\ref{lem:det-moments} gives $\E_g\Delta_f^2=k!$.
For any two embeddings $f,f'$, Cauchy--Schwarz and the same lemma give
\[
 \E_g[\Delta_f^2\Delta_{f'}^2]\le K_k(k!)^2,
\]
even when their images overlap. Expand the square in the fourth-moment
bound, take expectation over the vectors, and divide by $(k!)^2$.
Using $(1+2/k)^{k-1}\le e^2$ proves the second-moment bound.
The first-moment identity follows by dividing $\E Z^2=k!N_T$ by
$k!$. If $N_T=0$, the exact expansion is empty.
\end{proof}

The same averaging and median argument used for paths now gives a
relative $\varepsilon$-approximation with failure probability at
most $\delta$ using
$\OO(k^2\varepsilon^{-2}\log(2/\delta))$ independent trials.
This trial count applies to the complete sum over separator
assignments. There is no union bound over those assignments and
no additional statistical loss depending on their number.

\subsection{Running time}
\begin{proof}[Proof of Theorem~\ref{thm:intro-tree} for trees]
Inside a component, every intermediate exterior degree is at most
$h$. Evaluate products with several children sequentially: every
factor being merged has degree at most $h$. The same is true when
assembling component answers in~\eqref{eq:tree-global-new}.
Build the separator product one vertex at a time. These vector
insertions cost $2^k k^{\OO(1)}D^3$ operations, also when $h=0$.
By the choice of $c$ in~\eqref{eq:c-choice}, each small-factor
product costs at most $(2+\eta)^k k^{\OO(1)}D^3$ operations.

There are at most $n^{|S|}$ injective assignments to the separator.
For each, check the edges of $T[S]$ and discard the assignment if
an edge is not preserved. If $S\ne\varnothing$, build a host adjacency
matrix once in $\OO(n^2+m)$ time. This permits constant-time edge
tests when checking separator assignments and forming the allowed
sets $L_i^\sigma$. Since $|S|\ge1$, this preprocessing is absorbed
by $n^{|S|}(n+m)$. If $S=\varnothing$, there are no separator-edge
tests and adjacency lists suffice. For each assignment the allowed
sets can be computed in $k^{\OO(1)}n$ time.

There are $\OO(k)$ child merges per host vertex and $\OO(k)$
neighbor aggregations per separator assignment. The aggregations
scan host adjacency lists and add matrix-valued exterior coordinates;
their total cost per assignment is at most
$2^k k^{\OO(1)}D^2(n+m)$.
Consequently one complete trial costs
\begin{equation}\label{eq:tree-precise-time}
 (2+\eta)^k k^{\OO(1)}D^3 n^{|S|}(n+m).
\end{equation}
This bound includes all separator assignments. Component maps are
computed by the recurrence, not enumerated individually.
Sampling all vectors and matrices also fits the bound.

Set $D=k$ and apply Proposition~\ref{prop:tree-unweighted-moments}
and the amplification argument above. Since $|S|\le2c+2$ and
$c=\OO(\eta^{-1}\log(2/\eta))$ for $0<\eta\le1$, the resulting
running time is
\[
 (2+\eta)^k n^{\OO_\eta(1)}
 \varepsilon^{-2}\log(2/\delta),
\]
where the exponent of $n$ is
$\OO(\eta^{-1}\log(2/\eta))$.
Here we absorb polynomial factors in $k$ and the factor $n+m$
into the power of $n$, using $k\le n$ and simplicity of the host.
Section~\ref{sec:finite-new} replaces the Gaussian construction by
finite randomness and exact integer and rational arithmetic with
polynomial bit overhead.
\end{proof}

\subsection{Forests, orientations, and copies}
For a forest $F$, root each connected component and perform greedy
marking across all of them with the common threshold
$h=\lfloor k/c\rfloor$. Each mark is charged to at least $h+1$
previously uncharged vertices, so there are at most
$k/(h+1)\le c$ marks in total. Every component left after removing
the marked set has at most $h$ vertices.
The two-boundary property is unnecessary, so no least-common-ancestor
closure is needed.

Use this marked set as $S$ and apply the same allowed-set definition,
component recurrence, and global product, with $F$ in place of $T$.
Fix the component and traversal orders once for all assignments.
Exterior multiplication enforces distinctness both within and between
the original forest components, including isolated vertices.
The expansion and moment proofs are unchanged, with
$N_F=|\Emb(F,G)|$ in place of $N_T$.
This proves the forest case of Theorem~\ref{thm:intro-tree}.
Using only greedy marks for either trees or forests gives the explicit
host factor $n^c(n+m)$ in~\eqref{eq:tree-precise-time}.

For an oriented forest in a directed host, choose the separator and
root the components in the underlying undirected pattern.
Require $\sigma$ to preserve the directions of arcs within $S$.
For an arc $i\to j$ with $i\notin S$ and $j\in S$, the allowed
set requires $(v,\sigma(j))\in E(G)$; for an arc $j\to i$, it
requires $(\sigma(j),v)\in E(G)$.
At a parent $i$ and child $j$ in a component, replace the neighbour
sum by a sum over out-neighbours of $v$ if the pattern arc is
$i\to j$, and over in-neighbours if it is $j\to i$.
All products still follow the same fixed traversal order.
The expansion and running-time arguments therefore remain valid,
with $m$ counting host arcs.

Finally, embeddings distinguish the vertices of the specified pattern.
Every non-induced subgraph isomorphic to $F$ has exactly
$|\Aut(F)|$ embeddings of $F$ onto it. Dividing the estimated
embedding count by $|\Aut(F)|$ therefore estimates the number of
such copies with the same relative error and success probability.
For an oriented pattern, use its direction-preserving automorphisms.

For trees and forests, this automorphism number is computable in
polynomial time. For a rooted tree, group isomorphic child subtrees,
multiply their automorphism numbers, and multiply by the factorial
of each group's multiplicity. For an oriented tree, record the
parent--child edge direction in each child type. For an unrooted
tree, use its one or two underlying tree centres and include a
centre swap exactly when it preserves the pattern. For a forest,
also group isomorphic connected components and account for their
permutations. These integers have polynomial bit length, since
$|\Aut(F)|\le k!$.
This division counts subgraphs specified by their vertices and edges;
it does not in general count distinct supporting vertex sets.
%
\section{Conclusions and further questions}
We gave randomized approximation algorithms for counting paths and
embeddings of specified forests. For paths on $k$ vertices, the
running time has a $2^k$ dependence on $k$. For every fixed
$\eta>0$, the forest algorithm has running time
$(2+\eta)^k n^{\OO_\eta(1)}\varepsilon^{-2}\log(2/\delta)$.
The construction also gives algorithms for patterns with supplied
path and tree decompositions.

The algorithms combine exterior algebra with random matrix
signatures. Exterior multiplication eliminates maps that repeat a
host vertex. The signatures make cross terms between distinct
embeddings vanish in expectation, and the moment bounds ensure
that polynomially many trials suffice. For forests, a small-component
separator allows the embedding sum to be evaluated using exterior
products with a small-degree factor. The estimators admit
finite-random-bit implementations with polynomial bit overhead.

Two questions remain. First, can the number of embeddings of a given
forest on $k$ vertices be approximated in
$2^k\poly(n,\varepsilon^{-1})$ time with constant success probability?
Our bound approaches base $2$, but the polynomial exponent in $n$
increases as $\eta$ decreases. It therefore does not give this
running time. Second, can the exponential space used by our
algorithms be reduced to polynomial space while retaining their
running-time bounds and approximation guarantees?

\section*{Acknowledgements}
Saket would like to thank Prof. R. Balasubramanian for suggesting the use of
Gaussian random variables in the estimator.

\paragraph{Use of generative AI.}
The authors used OpenAI ChatGPT in the preparation of this manuscript to assist with
(i)~conducting a literature survey,
(ii)~tightening and improving the language and exposition throughout the paper,
(iii)~suggesting formulations for parts of the introduction,
(iv)~generating preliminary drafts of proofs based on proof scaffolding, proof strategies,
and technical directions provided by the authors,
(v)~preparing and drawing figures, and
(vi)~preparing an initial draft of this disclosure statement.
The tool materially assisted the preparation and exposition of the manuscript across
multiple sections. The mathematical ideas, proof strategies, and technical direction
were supplied and developed by the authors. The authors carefully checked and take
full responsibility for the correctness and originality of all content, including all
mathematical statements, proofs, and references.

\bibliographystyle{alpha}
\bibliography{counting}
\begingroup
\small
\setlength{\cftbeforesecskip}{0pt}
\setlength{\cftbeforepartskip}{6pt}
\renewcommand{\cftsecfont}{\small}
\renewcommand{\cftsecpagefont}{\small}
\renewcommand{\cftpartfont}{\small\bfseries}
\renewcommand{\cftpartpagefont}{\small\bfseries}
\endgroup

\appendix
\section{A detailed implementation of the path dynamic program}
\label{app:path-implementation}
This appendix expands the implementation in
Section~\ref{sec:coordinate-implementation} and the cost proof in
Proposition~\ref{prop:implementation-cost}. We follow one trial from its
random choices to its final scalar, including the individual table entries.
The goal is to make the source of the $2^k$ running time visible at the
level of the actual loops. The main text contains the estimator and its
proofs; this appendix supplies a slower account of how to compute it.

Throughout, a path has $k$ vertices and $k-1$ arcs. The host graph is
simple and directed, with $n$ vertices and $m$ arcs, and $2\le k\le n$.
We use precisely the convention of the main text: append exterior vectors
on the right, store auxiliary \emph{rows}, and use a matrix $M_{p,v}$
indexed by the path position and its image. There is a matrix at position
one as well. We first explain unit weights and return to local weights
in Appendix~\ref{app:path-local-weights}.

\subsection{What each table index means}
Fix a layer $p$ and an endpoint $v$. The state is
$X_{p,v}=\sum_{|S|=p}e_S\otimes X_{p,v}[S]$, where
$X_{p,v}[S]\in\R^{1\times D}$. One scalar entry can therefore be
addressed by $(p,v,S,t)$, with $t\in[D]$. These indices have different
meanings:
\begin{center}
\begin{tabular}{@{}ll@{}}
\toprule
Index & Meaning\\
\midrule
$p$ & Number of vertices in the partial walk.\\
$v\in V(G)$ & Its last host vertex.\\
$S\subseteq[k]$, $|S|=p$ & An exterior-coordinate index set.\\
$t\in[D]$ & A coordinate of the auxiliary row.\\
\bottomrule
\end{tabular}
\end{center}
In particular, $S$ is not a set of visited graph vertices. Its elements
refer to the $k$ coordinates of the assigned vectors $g_v$. This distinction
explains why the table has $\binom{k}{p}$ rows at an endpoint, rather than
$\binom{n}{p}$ rows.

For $k=3$ and $D=2$, the complete shape at one endpoint is
\begin{center}
\begin{tabular}{@{}clc@{}}
\toprule
Layer & Exterior row labels & Number of scalar entries\\
\midrule
$1$ & $\{1\},\{2\},\{3\}$ & $3\cdot2=6$\\
$2$ & $\{1,2\},\{1,3\},\{2,3\}$ & $3\cdot2=6$\\
$3$ & $\{1,2,3\}$ & $1\cdot2=2$\\
\bottomrule
\end{tabular}
\end{center}
The number of walks contributing to these entries can be much larger
than the number of entries. We add their numerical contributions into the
same rows; we never attach a separate record to each walk.

\paragraph{What contributes to a row?}
Let $P=(v_1,\ldots,v_p)$ be a walk ending at $v$, and define the auxiliary
row $r(P)=b_0^\top M_{1,v_1}\cdots M_{p,v_p}$, where $b_0=e_1\in\R^D$.
For $S=\{s_1<\cdots<s_p\}$, let $G_{P,S}$ have entries
$(G_{P,S})_{a,b}=g_{v_b}(s_a)$. The contribution of $P$ to $X_{p,v}[S]$
is exactly $\det(G_{P,S})r(P)$, by
Lemma~\ref{lem:wedge-coordinate-prelim}. Thus the exterior coordinate
supplies a minor, while the auxiliary row supplies an ordered matrix
product. This is an interpretation of the entry, not an enumeration
procedure for computing it.

\subsection{What is sampled, and what is reused?}
For each vertex $v$, sample $g_v\in\R^k$ with independent $N(0,1)$
coordinates. For each $(p,v)\in[k]\times V(G)$, sample a $D\times D$
matrix $M_{p,v}$ with independent $N(0,1/D)$ entries. These choices are
mutually independent. They are fixed for the entire trial.

All paths visiting $v$ use the same exterior vector $g_v$, regardless
of position. This reuse makes a repeated visit contribute a repeated
vector, which forces its exterior product to vanish. The matrices have
a different indexing rule: position $p$ at vertex $v$ always uses
$M_{p,v}$, and different role--image pairs have independent matrices.
For a fixed $(p,v)$, that one matrix is used on every exterior row and
every incoming contribution. Resampling it for individual rows would
change the polynomial computed by the recurrence.

The boundary vector $b_0=(1,0,\ldots,0)^\top$ is deterministic. It is
not a host vertex or a stored subset. The initial auxiliary row is
$b_0^\top M_{1,v}$, the first row of $M_{1,v}$. A partial walk carries
this row forward by multiplying the subsequent matrices on the right.
At the end, multiplication by $b_0$ selects the first entry of the final
row. The factor $D$ in the output corrects for this one-coordinate
projection, as explained in Section~\ref{sec:estimator}.

\subsection{The incoming-arc scan}
Suppose all states of degree $p-1$ have been computed. Before inserting
the next vertex, form an incoming table
\begin{equation}\label{eq:app-incoming}
 U_{p,v}[S]=\sum_{u:(u,v)\in E(G)}X_{p-1,u}[S],
 \qquad |S|=p-1.
\end{equation}
Its exterior degree is still $p-1$. The subscript $p$ says which update
will use it; no new exterior vector has yet been inserted.

To compute this table, initialize it to zero and scan the adjacency
lists. An arc $(u,v)$ contributes one row addition
$U_{p,v}[S]\mathrel{+}=X_{p-1,u}[S]$ for each $(p-1)$-set $S$.
A row has $D$ entries, so this scan uses
$\OO(mD\binom{k}{p-1})$ arithmetic operations. It does not examine
pairs of incoming walks or pairs of exterior rows.

Why can we add the incoming rows before applying the matrix? All arcs
entering $v$ at this layer use the same $M_{p,v}$ and the same $g_v$.
Bilinearity gives
\[
 \sum_{u:(u,v)\in E}(X_{p-1,u}\wedge g_v)M_{p,v}
 =\left(\left(\sum_{u:(u,v)\in E}X_{p-1,u}\right)\wedge g_v\right)M_{p,v}.
\]
Consequently the matrix multiplication is performed once per endpoint
and exterior row, after the arc scan. This is why its cost below has a
factor $n$, whereas the incoming scan has a factor $m$.

\subsection{Appending one vector: every coordinate and every sign}
Fix $v$ and write $U=U_{p,v}$. Expand
$U=\sum_{|S|=p-1}e_S\otimes U[S]$ and
$g_v=\sum_{j=1}^k g_v(j)e_j$. For a term indexed by $(S,j)$, there
are two possibilities. If $j\in S$, then $e_S\wedge e_j=0$. Otherwise
put $T=S\cup\{j\}$. The new $e_j$ starts at the right end of $e_S$;
to obtain the increasing order on $T$, move it past the elements of $S$
that are larger than $j$.

If $j$ occupies position $q$ in the increasing order on $T$, there are
$p-q$ such elements. Therefore the intermediate row after the wedge is
\begin{equation}\label{eq:app-wedge}
 H_{p,v}[T]=\sum_{j\in T}(-1)^{p-\operatorname{pos}_T(j)}
              g_v(j)U_{p,v}[T\setminus\{j\}],\qquad |T|=p.
\end{equation}
The sign comes from moving the appended coordinate to its position;
it does not depend on the graph or on the auxiliary coordinate.

For $k=3$, the entire degree-one to degree-two update is
\begin{align*}
 H[\{1,2\}]&=g_v(2)U[\{1\}]-g_v(1)U[\{2\}],\\
 H[\{1,3\}]&=g_v(3)U[\{1\}]-g_v(1)U[\{3\}],\\
 H[\{2,3\}]&=g_v(3)U[\{2\}]-g_v(2)U[\{3\}].
\end{align*}
The degree-two to degree-three update has just one output row:
\[
 H[\{1,2,3\}]=g_v(3)U[\{1,2\}]
               -g_v(2)U[\{1,3\}]+g_v(1)U[\{2,3\}].
\]
These are row-vector equations: multiply and add each of the $D$
auxiliary entries. There is no multiplication between two rows of $U$.

\paragraph{Counting the incidences.}
For each of the $\binom{k}{p}$ output sets $T$, the displayed sum has
$p$ terms. Thus there are exactly $p\binom{k}{p}$ possible incidences.
Equivalently, each input set $S$ has $k-p+1$ available new coordinates,
giving $(k-p+1)\binom{k}{p-1}$ incidences. These count the same pairs:
\begin{equation}\label{eq:app-incidences}
 p\binom{k}{p}=(k-p+1)\binom{k}{p-1}.
\end{equation}
Some numerical entries may be zero, but the algorithm can simply process
all these incidences. A scalar times a row, accumulated into another row,
costs $\OO(D)$ operations. The wedge updates at all endpoints therefore
cost $\OO(nDp\binom{k}{p})$ at this layer.

\subsection{Multiplying the auxiliary row}
After the wedge update, set $X_{p,v}[T]=H_{p,v}[T]M_{p,v}$. At the
level of scalar entries this means
\begin{equation}\label{eq:app-matrix}
 X_{p,v}[T]_t=\sum_{a=1}^D H_{p,v}[T]_a(M_{p,v})_{a,t},
 \qquad t\in[D].
\end{equation}
One output entry takes $D$ multiplications and at most $D-1$ additions;
there are $D$ output entries. Hence one row--matrix multiplication costs
$\OO(D^2)$, and all such multiplications at this layer cost
$\OO(nD^2\binom{k}{p})$.

The full state is a table with $\binom{k}{p}$ rows of length $D$.
The matrix acts separately on those rows; it never multiplies the table
by another exterior table. We can form one temporary row $H_{p,v}[T]$,
transform it, and store the result, then reuse the temporary row for the
next $T$. A full table of $H$ is unnecessary.

\subsection{A complete numerical example}
\label{app:path-worked-example}
Take the directed diamond with arcs $ab,bc,ad,dc$. Its two three-vertex
paths are $(a,b,c)$ and $(a,d,c)$. To see the arithmetic explicitly, set
$k=3$ and $D=2$ and fix the following vector values:
\[
 g_a=(1,0,1)^\top,\quad g_b=(0,1,1)^\top,\quad
 g_d=(1,1,0)^\top,\quad g_c=(1,2,4)^\top.
\]
Take every matrix to be $I_2$ except
\[
 M_{2,b}=\begin{pmatrix}1&2\\0&1\end{pmatrix},\qquad
 M_{2,d}=\begin{pmatrix}2&-1\\1&0\end{pmatrix},\qquad
 M_{3,c}=\begin{pmatrix}1&1\\1&2\end{pmatrix}.
\]
These are fixed inputs for checking the recurrence, not a proposed random
sampling rule. The algebraic expansion is valid for every such assignment.
The choice $D=2$ keeps each row visible; the approximation algorithm uses
$D=k$. In the tables below we abbreviate $\{1,2\}$ by $12$, and similarly
for the other sets. Every parenthesized pair is an auxiliary row.

\paragraph{Layer one.}
Since $b_0^\top M_{1,v}=(1,0)$ for every $v$, initialization multiplies
this row by each coordinate of $g_v$:
\begin{center}
\begin{tabular}{@{}c r r r@{}}
\toprule
Endpoint $v$ & $X_{1,v}[1]$ & $X_{1,v}[2]$ & $X_{1,v}[3]$\\
\midrule
$a$ & $(1,0)$ & $(0,0)$ & $(1,0)$\\
$b$ & $(0,0)$ & $(1,0)$ & $(1,0)$\\
$d$ & $(1,0)$ & $(1,0)$ & $(0,0)$\\
$c$ & $(1,0)$ & $(2,0)$ & $(4,0)$\\
\bottomrule
\end{tabular}
\end{center}

\paragraph{The first arc scan.}
Vertices $b$ and $d$ each receive $X_{1,a}$, vertex $c$ receives
$X_{1,b}+X_{1,d}$, and $a$ receives nothing. Thus
\begin{center}
\begin{tabular}{@{}c r r r@{}}
\toprule
Endpoint $v$ & $U_{2,v}[1]$ & $U_{2,v}[2]$ & $U_{2,v}[3]$\\
\midrule
$a$ & $(0,0)$ & $(0,0)$ & $(0,0)$\\
$b$ & $(1,0)$ & $(0,0)$ & $(1,0)$\\
$d$ & $(1,0)$ & $(0,0)$ & $(1,0)$\\
$c$ & $(1,0)$ & $(2,0)$ & $(1,0)$\\
\bottomrule
\end{tabular}
\end{center}

\paragraph{The first wedge and matrix updates.}
At $b$, the three wedge rows are $(1,0)$, $(1,0)$, and $(-1,0)$.
Multiplication by $M_{2,b}$ turns them into $(1,2)$, $(1,2)$, and
$(-1,-2)$. At $d$, the wedge rows are $(1,0)$, $(-1,0)$, and
$(-1,0)$, and multiplication by $M_{2,d}$ gives $(2,-1)$,
$(-2,1)$, and $(-2,1)$. At $c$, the wedge rows are
$(0,0)$, $(3,0)$, and $(6,0)$, and $M_{2,c}=I_2$. The complete table is
\begin{center}
\begin{tabular}{@{}c r r r@{}}
\toprule
Endpoint $v$ & $X_{2,v}[12]$ & $X_{2,v}[13]$ & $X_{2,v}[23]$\\
\midrule
$a$ & $(0,0)$ & $(0,0)$ & $(0,0)$\\
$b$ & $(1,2)$ & $(1,2)$ & $(-1,-2)$\\
$d$ & $(2,-1)$ & $(-2,1)$ & $(-2,1)$\\
$c$ & $(0,0)$ & $(3,0)$ & $(6,0)$\\
\bottomrule
\end{tabular}
\end{center}
For instance, $X_{2,c}$ contains the two-vertex paths $(b,c)$ and
$(d,c)$. Although neither can be extended in this graph, it is correct
and harmless to compute their rows.

\paragraph{Layer three and the output.}
Only $c$ receives a nonzero degree-two incoming table. Adding the rows
at $b$ and $d$ gives
$U_{3,c}[12]=(3,1)$, $U_{3,c}[13]=(-1,3)$, and
$U_{3,c}[23]=(-3,-1)$. Consequently
\[
 H_{3,c}[123]=4(3,1)-2(-1,3)+(-3,-1)=(11,-3),
 \qquad X_{3,c}[123]=(11,-3)M_{3,c}=(8,5).
\]
All other degree-three states are zero. The endpoint sum is therefore
$z=(8,5)$, its first coordinate is $zb_0=8$, and the trial output is
$Y=2\cdot8^2/3!=64/3$.

We can check this result directly against the two path contributions.
Their determinants are $\det(g_a,g_b,g_c)=1$ and
$\det(g_a,g_d,g_c)=5$. Their auxiliary rows are
$(1,0)M_{2,b}M_{3,c}=(3,5)$ and
$(1,0)M_{2,d}M_{3,c}=(1,0)$. Their determinant-weighted sum is
$(3,5)+5(1,0)=(8,5)$, exactly the computed row.
The numerical output of this fixed assignment need not be the count two.
The count is recovered in expectation under the prescribed random law,
and concentration is obtained by independent trials.

\paragraph{Where a repeated vertex vanishes.}
If the arc $ba$ were also present, the walk $(a,b,a)$ would occur in
the walk expansion. Its degree-two minors are $1,1,-1$, so appending
$g_a=(1,0,1)^\top$ gives $1\cdot1-0\cdot1+1\cdot(-1)=0$ in the
top coordinate. Every auxiliary entry is multiplied by this same zero.
The new matrix $M_{3,a}$ does not revive the term. This illustrates how
repetition is removed by arithmetic within the coordinate update,
without a membership test on visited host vertices.

\subsection{Complete pseudocode and the table invariant}
\label{app:path-pseudocode}
Here is a direct implementation of one trial. Each assignment of a row
means an assignment of all its $D$ entries. The matrices may be sampled
at the beginning, as written, or generated when first needed as described
in Appendix~\ref{app:path-storage}.

\begin{enumerate}[leftmargin=1.7em,itemsep=5pt]
\item \textbf{Prepare the trial.} Set $b_0=e_1$. Sample all $g_v$ and
      $M_{p,v}$ independently from the laws above. Prepare an indexing
      scheme for the subsets of $[k]$.
\item \textbf{Initialize the previous layer.} For every endpoint $v$,
      obtain the first row $r=b_0^\top M_{1,v}$. For each $j\in[k]$,
      set $\mathrm{Prev}[v,\{j\}]=g_v(j)r$.
\item \textbf{For each layer $p=2,\ldots,k$, execute all of the following.}
      \begin{enumerate}[label=(\alph*),leftmargin=1.8em,itemsep=4pt]
      \item Set $\mathrm{In}[v,S]=0$ for every $v$ and $(p-1)$-set $S$.
      \item For each arc $(u,v)$, and each $(p-1)$-set $S$, add
            $\mathrm{Prev}[u,S]$ to $\mathrm{In}[v,S]$.
      \item Allocate $\mathrm{Next}$ with rows indexed by $(v,T)$,
            $|T|=p$. For each endpoint $v$ and each such $T$, initialize
            one temporary row $h=0$. For $j$ in the increasing order on
            $T$, accumulate
            \[
             h\mathrel{+}=(-1)^{p-\operatorname{pos}_T(j)}
                            g_v(j)\mathrm{In}[v,T\setminus\{j\}].
            \]
            Then set $\mathrm{Next}[v,T]=hM_{p,v}$, computing each
            output entry by~\eqref{eq:app-matrix}.
      \item Once all endpoints have been processed, release
            $\mathrm{Prev}$ and $\mathrm{In}$, and rename
            $\mathrm{Next}$ as $\mathrm{Prev}$.
      \end{enumerate}
\item \textbf{Finish the trial.} Set $z=\sum_v\mathrm{Prev}[v,[k]]$,
      read its first coordinate $z_1$, and return $Y=Dz_1^2/k!$.
\end{enumerate}
The layer loop includes both the arc scan and all endpoint updates. In
particular, no degree-$p$ value is used as a degree-$(p-1)$ input. This
separation also makes the algorithm valid when the host graph has directed
cycles.

\paragraph{Invariant.}
After initialization, and after each completed layer $p$, the row
$\mathrm{Prev}[v,S]$ equals
\begin{equation}\label{eq:app-coordinate-invariant}
 \sum_{\substack{P=(v_1,\ldots,v_p)\text{ a directed walk}\\v_p=v}}
       \det(G_{P,S})\,b_0^\top M_{1,v_1}\cdots M_{p,v_p},\qquad |S|=p.
\end{equation}
For $p=1$, a one-column minor is $g_v(j)$, so this is the initialization.
Assume it holds at degree $p-1$. The arc scan lists every possible
extension exactly once, through its last arc. Formula~\eqref{eq:app-wedge}
expands the resulting minor along its last column. Its cofactor sign
$(-1)^{q+p}$ equals $(-1)^{p-q}$, the sign used in the update.
Formula~\eqref{eq:app-matrix} appends the correct matrix to the row.
Distributivity proves the invariant at degree $p$.

If a walk repeats a vertex, two columns of every relevant minor are
equal, so its contribution is zero. At degree $k$, the unique exterior
coordinate is $[k]$, and its minors are the determinants $\Delta_P$
from the main text. Thus the endpoint sum and final projection produce
exactly $Z=\sum_{P\in\calP_k}\Delta_P\chi(P)$ and $Y=Z^2/k!$.
This proves the correctness of the coordinate procedure for every fixed
assignment, before any probability calculation is used.

\subsection{Operation counts, layer by layer}
\label{app:path-costs}
We count scalar additions and multiplications using the elementary dense
row--matrix algorithm. Put $B_p=\binom{k}{p}$. A row addition or a
scalar-times-row accumulation costs $\OO(D)$; a row--matrix product costs
$\OO(D^2)$. For one layer $p\ge2$, the table gives every numerical step:
\begin{center}
\small
\begin{tabular}{@{}p{0.42\linewidth}p{0.25\linewidth}p{0.23\linewidth}@{}}
\toprule
Step & Number of items & Total operations\\
\midrule
Clear incoming rows & $nB_{p-1}$ rows & $\OO(nDB_{p-1})$\\
Scan arcs and add rows & $mB_{p-1}$ rows & $\OO(mDB_{p-1})$\\
Accumulate wedge terms & $npB_p$ incidences & $\OO(nDpB_p)$\\
Multiply auxiliary rows & $nB_p$ products & $\OO(nD^2B_p)$\\
Write next-layer rows & $nB_p$ rows & $\OO(nDB_p)$\\
\bottomrule
\end{tabular}
\end{center}
No factor in this table is the number of paths. The entries depend only
on the graph size, the exterior degree, and the auxiliary dimension.

Initialization costs $\OO(nkD)$ operations: read one matrix row at each
vertex and scale it for each of the $k$ singleton coordinates. Sampling
all exterior vectors and matrices requires $nk+knD^2$ scalar draws.
In the ideal Gaussian model we account for each draw at unit cost;
Section~\ref{sec:finite-new} supplies the finite implementation. Summing
endpoints at the end costs $\OO(nD)$, and $k!$ can be precomputed in
$\OO(k)$ multiplications.

\paragraph{Why the sum of coordinate counts is $2^k$.}
Every subset of $[k]$ is obtained by deciding, independently for each
coordinate, whether it is present. There are $2^k$ choices in total.
Partitioning them by their cardinality gives
\begin{equation}\label{eq:app-binomial-sum}
 \sum_{p=0}^k B_p=2^k.
\end{equation}
In particular, the arc scans sum $B_1+\cdots+B_{k-1}=2^k-2$.
Although a different table shape is used at each layer, we sum these
shapes; we do not multiply their sizes.

\paragraph{Why the extra wedge factor is only $k$.}
Count pairs $(T,j)$ with $T\subseteq[k]$ and $j\in T$. Counting by
$|T|=p$ gives $\sum_p pB_p$. Alternatively, first choose $j$ in $k$
ways, and then choose an arbitrary subset of the other $k-1$ coordinates.
Therefore
\begin{equation}\label{eq:app-marked-binomial-sum}
 \sum_{p=0}^k pB_p=k2^{k-1}.
\end{equation}
The degree-one insertions account for $k$ of these incidences; the wedge
updates at layers $2,\ldots,k$ account for the remaining $k2^{k-1}-k$.
This is the full count of possible exterior incidences over a trial.

\paragraph{Adding all operations.}
The arc scans cost $\OO(mD2^k)$, the wedge steps cost
$\OO(nDk2^{k-1})$, and the matrix products cost $\OO(nD^22^k)$.
Clearing and writing the tables costs $\OO(nD2^k)$, which is absorbed
by the preceding terms for $D\ge1$. Including initialization and random
generation gives exactly the bound used in the main text:
\begin{equation}\label{eq:app-total-cost}
 \OO\bigl(2^k(mD+nkD+nD^2)+knD^2\bigr).
\end{equation}
With $D=k$, this is $\OO(2^k(km+k^2n))$: the generation term $nk^3$
is absorbed because $k\le2^k$. Thus one trial uses
$2^k\operatorname{poly}(k)(n+m)$ arithmetic operations.

\paragraph{Subset indexing does not introduce a second exponential factor.}
A concrete implementation represents $S$ by its $k$-bit mask and stores
its row at a dense address within its degree. One can build lists of
masks by degree and an array that maps each mask to this dense address.
There are $2^k$ masks in all; building the lists and addresses by inspecting
their bits takes $\OO(k2^k)$ work and $\OO(2^k)$ index words. Given
$T$ and $j\in T$, clear its $j$-th bit and look up the address of
$T\setminus\{j\}$. No search through the previous table is needed.

With $k$-bit index words, these address lookups take constant time in the
usual random-access model. Even enumerating the elements of each $T$
by a $k$-position scan adds only $\OO(nk2^k)$ work over all layers,
which is absorbed by~\eqref{eq:app-total-cost}. Index preparation is
also absorbed since $n\ge k$ and $D\ge1$, and can be reused across
trials. Charging operations on these index words at their bit cost adds
only a polynomial factor in $k$ and the logarithm of the table size.

\subsection{Where larger exponential costs would arise}
The number $2^k$ by itself is not enough to prove the running time:
we must also inspect the work needed to update the table. For two
arbitrary exterior elements, a direct product loops over disjoint pairs
$(A,B)$. Each coordinate lies in $A$, in $B$, or in neither, giving
$3^k$ possible pairs across all degrees. A path transition has a special
form: its second exterior factor is a single vector. Only pairs
$(S,\{j\})$ occur. Their count is the $k2^{k-1}$ derived above.

Similarly, explicitly storing pairs of degree-$p$ exterior coordinates
would require $B_p^2$ positions at that degree. Summing gives
\[
 \sum_{p=0}^k\binom{k}{p}^2=\binom{2k}{k}.
\]
One proof counts $k$-subsets of two disjoint $k$-sets by the number chosen
from the first. Our trial never stores this paired table. It propagates
one row per exterior coordinate and squares only the final scalar.
The probabilistic moment argument justifies using independent repetitions
to estimate the required mass. These comparisons describe concrete table
operations; they are not lower bounds for other possible algorithms.

\subsection{Space usage and reuse between layers}
\label{app:path-storage}
At layer $p$, the pseudocode simultaneously stores a previous table and
an incoming table of size $nDB_{p-1}$ each, and a next table of size
$nDB_p$. Only one temporary row of length $D$ is needed for the wedge
and matrix step. Therefore the numerical tables use at most
$nD(2B_{p-1}+B_p)+\OO(D)$ scalar words. Writing
$B_{\max}=\max_{0\le p\le k}B_p=\binom{k}{\lfloor k/2\rfloor}$,
this is $\OO(nDB_{\max})$.

Retaining the exterior vectors takes $nk$ words. If all matrices are
sampled and retained at the beginning, they occupy $knD^2$ words. The
subset-address lists require $\OO(2^k)$ index words; they fit the table
bound because $2^k\le(k+1)B_{\max}$ and $k\le n$. Thus, apart from
the input graph, the total working storage is
\[
 \OO(nDB_{\max}+nk+knD^2)
\]
words. This is the storage convention in
Proposition~\ref{prop:implementation-cost}.

For paths alone, we can generate $M_{p,v}$ when processing endpoint $v$
at layer $p$, use it on every exterior row there, and then discard it.
At initialization, similarly generate $M_{1,v}$ before producing all
singleton rows at $v$. Fresh independent draws give the same joint law
as sampling every matrix in advance. This reduces matrix storage to
$\OO(D^2)$ words, so the working space becomes
$\OO(nDB_{\max}+nk+D^2)$. The vectors $g_v$ remain stored and reused
at every layer. Tree separator computations reuse role--image matrices
across many assignments, so the path streaming observation should not
be substituted into that algorithm without accounting for its different
reuse requirements.

\subsection{From one trial to a relative approximation}
Take $D=k$ and $C_k=3e^2(k+1)(k+2)/2$. By
Theorem~\ref{thm:human-path-estimator}, one trial has expectation $N_k$
and second moment at most $C_kN_k^2$. Form each group average from
$R=\lceil8C_k\varepsilon^{-2}\rceil$ independent trials. For $N_k>0$,
Chebyshev's inequality bounds the probability of a relative error larger
than $\varepsilon$ in this average by $1/8$.

Take $q$ to be the smallest positive odd integer at least
$2\log_2(1/\delta)$. Form $q$ independent groups and return the median
of their averages. If the median fails, at least half
the groups fail. The same subset bound as in the main text gives failure
probability at most $2^q(1/8)^{q/2}=2^{-q/2}\le\delta$.
If $N_k=0$, every trial is identically zero. Hence the total number of
trials is $\OO(k^2\varepsilon^{-2}\log(2/\delta))$.

Multiplying this by~\eqref{eq:app-total-cost} gives, for the elementary
implementation with $D=k$, the explicit arithmetic bound
\[
 \OO\bigl(2^k(k^3m+k^4n)\varepsilon^{-2}\log(2/\delta)\bigr).
\]
In particular, the repetition count introduces only polynomial factors;
it does not change the exponential base. Trials can run sequentially,
reusing the same table allocation. Keep one running sum for the current
group and the $q$ completed group averages. Beyond the working space of
one trial, this requires only $\OO(q)$ numerical words. All random
vectors and matrices are freshly resampled for each trial.




%
%
\end{document}